\documentclass[preprint,authoryear, 12pt]{elsarticle}
\usepackage{amsmath,amssymb,mathrsfs}
\numberwithin{equation}{section}
\usepackage{graphicx}
\usepackage{subfigure}

\usepackage{mathtools}
\usepackage{amsfonts}
\usepackage{cases}
\usepackage{color}

\usepackage{bbm}
\usepackage{enumerate}

\usepackage[thmmarks]{ntheorem}
{
\theoremstyle{nonumberplain}
\theoremheaderfont{\bfseries}
\theorembodyfont{\normalfont}
\theoremsymbol{\mbox{$\Box$}}
\newtheorem{proof}{Proof}
}
\newtheorem{theorem}{Theorem}[section]
\newtheorem{lemma}{Lemma}[section]

\newtheorem{remark}{Remark}[section]
\newtheorem{proposition}{Proposition}[section]

\newtheorem{assumption}{Assumption}[section]
\usepackage{titlesec}

\usepackage{geometry}
\geometry{left=2.5cm,right=2.5cm,top=2.5cm,bottom=2.5cm}

\setlength{\parskip}{.048in}

\journal{arXiv}

\begin{document}
\begin{frontmatter}
\title{Consumption-investment optimization with Epstein-Zin utility in unbounded non-Markovian markets}
\author[math1]{Zixin Feng}
\ead{zixin.feng@whu.edu.cn}
\author[math2]{Dejian Tian\corref{correspondingauthor}}
\ead{djtian@cumt.edu.cn}
\author[math3]{Harry Zheng}
\ead{h.zheng@imperial.ac.uk}
\address[math1]{School of Mathematics and Statistics, Wuhan University, Wuhan, P.R. China}
\address[math2]{School of Mathematics, China University of Mining and Technology, Xuzhou, P.R. China}
\address[math3]{Department of Mathematics, Imperial College, London SW7 2BZ, UK}

\cortext[correspondingauthor]{Corresponding author}

\begin{abstract}
The paper investigates the consumption-investment problem for an investor with Epstein-Zin utility in an incomplete market. A non-Markovian environment with unbounded parameters is considered, which is more realistic in practical financial scenarios compared to the Markovian setting. The optimal consumption and investment strategies are derived using the martingale optimal principle and quadratic backward stochastic differential equations (BSDEs) whose solutions admit some exponential moment. This integrability property plays a crucial role in establishing a key martingale argument. In addition, the paper also examines the associated dual problem and several models within the specified parameter framework.
\end{abstract}

\begin{keyword}
Epstein-Zin utility; consumption-investment problem; non-Markovian model; quadratic BSDE; exponential moment.
\end{keyword}
\end{frontmatter}
\textbf{2020 Mathematics Subject Classification} 91G10, 60H30
\section{Introduction}
In classical asset pricing models, an agent is assumed to have a time-additive von Neumann-Morgenstern utility, with two well-known types being constant absolute risk aversion (CARA) and constant relative risk aversion (CRRA). Since the landmark work of \cite{merton1971optimum} utilizing the dynamic programming approach, there has been a substantial amount of literature focusing on the optimal consumption and investment problem with these utilities, including \cite{karatzas1987optimal} developing the martingale approach, \cite{kramkov1999asymptotic} providing the duality result, \cite{hu2005utility} introducing the technique of backward stochastic differential equations (BSDEs) and \cite{horst2014forward} further proposing a new approach in terms of a fully coupled system of forward backward stochastic differential equations (FBSDEs).

However, widely used time-additive utilities inattentively restrict an artificial relationship between risk aversion $\gamma$ and the elasticity of intertemporal substitution (EIS) $\psi$, leading to abundant asset pricing anomalies such as the risk-free rate puzzle, the equity premium puzzle, etc. To untie risk aversion from EIS, \emph{Epstein-Zin} type recursive utility and its continuous-time counterpart, stochastic differential utility, were proposed by \cite{epstein1989substitution} and \cite{duffie1992stochastic} respectively, the latter of which could be defined rigorously using BSDEs. \cite{kraft2014stochastic} have formally established the connection between these two formulations. One can refer to \cite{schroder1999optimal,seiferling2015stochastic,xing2017consumption} for the existence and uniqueness results of Epstein-Zin stochastic differential utility (SDU) and its properties. Compared to time-additive utilities, recursive utilities and SDU provide a more general framework for addressing the aforementioned asset pricing anomalies, as demonstrated in \cite{bansal2004risks} and \cite{benzoni2011explaining}.

A significant amount of literature has focused on consumption and investment optimization with Epstein-Zin utility. For a non-Markovian setting, \cite{schroder1999optimal} developed the utility gradient approach with $\theta=\frac{1-\gamma}{1-1/\psi}>0$.  \cite{schroder2003optimal, schroder2005lifetime} extended it to include convex trading constraints under this parameter specification with unit EIS. \cite{el2001dynamic} stated a maximum principle for generalized recursive utilities by  FBSDEs in the presence of nonlinear wealth. For a Markovian financial market, \cite{kraft2013consumption,seiferling2015stochastic,kraft2017optimal} applied the dynamic programming principle to get the Hamilton-Jacobi-Bellman (HJB) equation and established verification theorems, allowing for many possible configurations of $\gamma$ and $\psi$. \cite{xing2017consumption} adopted the martingale optimal principle combined with BSDEs for $\gamma,\psi>1$.  \cite{matoussi2018convex} established a dual inequality between the primal and dual problems when $\gamma\psi\geq1,\psi>1$ or $\gamma\psi\leq1,\psi<1$. Other papers include \cite{herdegen2022infinite}  for the infinite time  Epstein-Zin utility model, \cite{aurand2023epstein,pu2024consumption} for applications, etc.


This paper investigates the optimal consumption and investment problem under Epstein-Zin utility in a non-Markovian market with unbounded parameters and $\gamma,\psi>1$, which is beyond \cite{matoussi2018convex} that requires bounded interest rate and market price of risk. {
The main contributions of the paper are the following three aspects. First, we extend the analysis to a non-Markovian model with unbounded parameters, especially unbounded market price of risk. {
\cite{schroder1999optimal,schroder2003optimal} developed the utility gradient approach in a non-Markovian market with bounded market parameters. The former studied the bounded market price of risk and the latter the bounded interest rate. \cite{matoussi2018convex} proposed a dual problem in a non-Markovian market with bounded market price of risk. We relax the assumption of bounded parameters and instead consider exponentially integrable parameters. We use  the martingale optimal principle to establish a non-Markovian BSDE with unbounded solutions; such a methodology was first proposed by \cite{hu2005utility}, then used by \cite{cheridito2011optimal} for time-additive utilities and  later by \cite{xing2017consumption} for the Epstein-Zin framework. 

Second, we obtain the martingale property with the martingale optimal principle and the dual equality with the exponential integrability of model parameters. In a non-Markovian market, we cannot rely on HJB equations from \cite{kraft2013consumption,kraft2017optimal} nor Lyapunov functions from \cite{xing2017consumption}, which are only applicable in a Markovian setting. {
The crucial difficulty with the martingale optimal principle is to verify that a certain exponential local martingale is a true martingale for the candidate optimal strategy. {
\cite{matoussi2018convex} considered a non-Markovian market with bounded interest rate and market price of risk, which ensures the martingale property automatically. We first derive some precise estimates on the solution of the BSDE, especially the exponential integrability property, and then use de La Vall$\rm\acute{e}$e Poussin's lemma with the relative entropy method to directly characterize the martingale property from the BSDE itself, without relying on specific market structure. To our best knowledge, there are few papers discussing the convex duality in an incomplete market with stochastic differential utility except \cite{matoussi2018convex} for a non-Markovian model with bounded parameters. We find that the same duality (in)equality can be constructed for a non-Markovian model with unbounded parameters that have finite exponential moment of some order and, following some verified conclusions from \cite{matoussi2018convex}, obtain the optimal state price density and the associated class (D) property. 


Third, our approach is  applicable to some well-known models with non-Markovian structure and possibly unbounded parameters, such as the Heston model, linear diffusion model and CIR model. Several papers have investigated these models in a Markovian market or a non-Markovian market with bounded parameters, see \cite{hu2005utility,cheridito2011optimal,gu2016dual,xing2017consumption,matoussi2018convex,feng2023optimal}. Under our assumption that  market parameters have an exponential moment of a certain order, BSDE techniques contribute to identify the optimal strategy of consumption and investment  in these models. 

Motivated by \cite{hu2018exponential} and \cite{delbaen2011uniqueness}, both \cite{hu2024utilityv4, hu2024utility} and we studied Epstein-Zin optimization in unbounded markets, independently of each other. The ideas and methodologies are largely the same. The main differences are threefold: First, $r=0$ in \cite{hu2024utility} while $r$ is exponentially integrable in our case. When $r$ is unbounded, the solution to the characterizing BSDE (\ref{mar}) becomes unbounded--a scenario not addressed in \cite{hu2024utility}. We established the martingale property for the key local martingale with unbounded $r$. Second, $0<\gamma\ne 1, \psi>1$ in \cite{hu2024utility} while  $\gamma>1,\psi>1$ in our case  to ensure  the parameter range for the dual problem satisfies $\gamma\psi\geq1,\psi>1$, which also makes our integrability conditions for the admissible class under convex duality further relaxed. Third, \cite{hu2024utility} investigated problems with convex closed set constraints while we did not. The analysis of the growth of the generator of the characterizing BSDE becomes more  complex, but its growth nature remains fundamentally unchanged. Apart from Epstein-Zin optimization, \cite{hu2024utilityv4, hu2024utility} also discussed some other topics in utility maximization, an excellent research memoir and reference book.

The remainder of the paper is organized as follows. Section 2 introduces the Epstein-Zin utility in a non-Markovian market and the associated consumption-investment problem. Section 3 discusses the main results, including the martingale optimal principle, the verification theorem, and the convex duality under our specified parameter conditions.  Section 4 presents three examples in non-Markovian markets, including Heston, linear diffusion, and CIR models. Section 5 contains  the proofs of main results (Proposition \ref{Y}, Theorems \ref{main} and \ref{dual}). Section 6 concludes the paper.

\section{The consumption-investment optimization under Epstein-Zin utility}

In this section, we formulate the consumption-investment optimization problem under Epstein-Zin utility in a non-Markovian financial market. 


\subsection{The non-Markovian model setup}
Given a time horizon $T<+\infty$. Let $\left(\Omega,\mathscr{F},\left(\mathscr{F}_t\right)_{0\leq t\leq T},\mathbb{P}\right)$ be a filtered probability space in which $\left(\mathscr{F}_t\right)_{0\leq t\leq T}$ is the natural filtration generated by a $n$-dimensional standard Brownian motion $W$, satisfying the usual hypotheses, completeness and right-continuity. 

Consider a financial market model comprising of a risk-free asset $S^0$ and risky assets $S=(S^1,\cdots, S^d)$, which satisfy the dynamics
\begin{align*}
&dS_t^0=r_tS_t^0dt,\\
&dS_t=\operatorname{diag}(S_t)\left[\left(r_t\mathbf{1}_d+\mu_t\right)dt+\sigma_tdW_t\right],
\end{align*}
where diag($S$) is a diagonal matrix with the elements of $S$ on the diagonal and $\mathbf{1}_d$ is a $d$-dimensional vector with each entry 1. The model coefficients $r$, $\mu$, $\sigma$ are $\mathbb{R}$-valued, $\mathbb{R}^d$-valued, $\mathbb{R}^{d\times n}$-valued predictable stochastic processes, called risk-free rate, excess return and volatility processes respectively. We assume $d\leq n$, so the market can be incomplete.

We need to impose the following assumptions for the main results.  
\begin{assumption}\label{1}
For some $p>1$ and $q>\max\left\{\frac{p(\gamma-1)(\gamma+2)}{\gamma(1+2(p-1)\gamma)}, 1\right\}$,
\begin{align}\label{11}
\mathbf{E}\left[\exp\left(2p\gamma\int_0^Tr_s^- ds\right)+\exp\left(q(\gamma-1)\int_0^Tr_s^+ ds\right)+\exp\left(q\int_0^T\mu'_s\Sigma_s^{-1}\mu_sds\right)\right]<+\infty,
\end{align}
where $\Sigma:=\sigma\sigma'$ is positive definite and the relative risk aversion $\gamma>1$. 
\end{assumption}

\begin{assumption}\label{2}
$r^-$ is bounded.
\end{assumption}

\begin{remark}\label{1interpretation}
Assumption \ref{1} implies that $\mathbf{E}\left[\exp\left(\frac{1}{2}\int_0^T\mu'_s\Sigma_s^{-1}\mu_sds\right)\right]<+\infty$, which indicates the market has no arbitrage opportunity by Novikov's condition.  Assumption \ref{2} is reasonable from the viewpoint of finance. 
There is a large literature on studying the Epstein-Zin utility model with different conditions on the parameters. For example, for Markovian models, $r,\mu,\sigma$ are all bounded in \cite{kraft2013consumption,kraft2017optimal}, $r+\frac{\mu'\Sigma^{-1}\mu}{2\gamma}$ is bounded from below in \cite{xing2017consumption}; for non-Markovian models, $r$ is bounded in \cite{schroder2003optimal},  $r$ and $\mu'\Sigma^{-1}\mu$ are both bounded or Markovian in \cite{matoussi2018convex}. 
\end{remark}

Assumption~\ref{1} looks technical. We thank the referees for suggesting a simplified sufficient condition, Assumption~\ref{3},  that implies  Assumptions~\ref{1} and \ref{2} due to the fact that $\lim_{p\to\infty}\frac{p(\gamma-1)(\gamma+2)}{\gamma(1+2(p-1)\gamma)}={(\gamma-1)(\gamma+2)\over 2\gamma^2}<1$ and we may choose $p$ sufficiently large such that $\max\left\{\frac{p(\gamma-1)(\gamma+2)}{\gamma(1+2(p-1)\gamma)}, 1\right\}=1$.
\begin{assumption}\label{3}
$r^-$ is bounded and for some $q>1$,
\begin{align*}
\mathbf{E}\left[\exp\left(q(\gamma-1)\int_0^Tr_s^+ ds\right)+\exp\left(q\int_0^T\mu'_s\Sigma_s^{-1}\mu_sds\right)\right]<+\infty.
\end{align*}
\end{assumption}

Let $\mathcal{R}_+$ be the set of all nonnegative progressively measurable processes on $[0,T]\times\Omega$. For $c\in\mathcal{R}_+$,  $c_t$ represents the consumption rate at time $t$ when $t<T$ and a lump sum consumption when $t=T$. Let $\mathcal{R}^d$ denote the set of all predictable processes taking values in $\mathbb{R}^d$.  

An agent invests in this financial market by selecting a consumption process $c\in \mathcal{R}_+$ and an investment strategy $\pi\in \mathcal{R}^d$ such that the associated wealth process $\mathcal{W}^{c,\pi}$ is nonnegative. Given an initial wealth $\omega$, $\mathcal{W}^{c,\pi}$ is given by, for $t\in [0,T]$, 
\begin{align}\label{wealth_process}
d\mathcal{W}_t^{c,\pi}=\mathcal{W}_t^{c,\pi}\left(\left(r_t+\pi_t'\mu_t\right)dt+\pi_t'\sigma_t dW_t\right)-c_tdt,\quad\mathcal{W}_0^{c,\pi}=\omega.
\end{align}
For simplicity of notation, we write $\mathcal{W}$ instead of $\mathcal{W}^{c,\pi}$ in the rest of the paper. 

\subsection{Epstein-Zin preferences and the associated optimization problem}\label{Epstein-Zin preferences}
Consider an agent whose preference over $\mathcal{R}_+$-valued consumption streams is described by a continuous-time stochastic differential utility of \emph{Epstein-Zin} type. To this end, recall the relative risk aversion $\gamma>1$ and let $\delta>0$ and $\psi>1$ stand for the discounting rate and the EIS respectively. Given a bequest utility function $V(c)=\frac{c^{1-\gamma}}{1-\gamma}$, the \emph{Epstein-Zin} utility for a pair of strategy ($c,\pi$) over a time horizon $T$ is a semimartingale $V^{c,\pi}$ satisfying, for $t\in [0,T]$, 
\begin{align}\label{EZuc}
	V_t^{c,\pi}=\mathbf{E}\left[\int_t^T f(c_s, V_s^{c,\pi})ds+V(\mathcal{W}_T)\Big|\mathscr{F}_t\right],
\end{align}
where $f:[0,+\infty)\times(-\infty,0]\to\mathbb{R}$ represents the Epstein-Zin aggregator, defined by
\begin{align}\label{EZa}
f(c,v)=\frac{\delta c^{1-\frac{1}{\psi}}}{1-\frac{1}{\psi}}\left((1-\gamma)v\right)^{1-\frac{1}{\theta}}-\delta\theta v,\quad {\rm with\ } \theta:=\frac{1-\gamma}{1-\frac{1}{\psi}}<0. 
\end{align}
A pair of strategy $(c,\pi)$ is called \emph{admissible} if it belongs to
\begin{align}\label{adset}
\mathcal{A}=\left\{(c,\pi): V^{c,\pi} {\rm\ exists\ }, V^{c,\pi}<0, ~ ~ V^{c,\pi} ~and~\  \mathcal{W}^{1-\gamma}e^Y {\rm\ are\ of\ class\ (D)}\right\},
\end{align}
where $Y$ is defined by BSDE (\ref{mar}). There is a large amount of literature on the admissible class  for the Epstein-Zin optimization including sufficient conditions for the existence of Epstein–Zin utility, which in particular indicates that $\mathcal{A}\neq\emptyset$, see \cite{xing2017consumption, aurand2023epstein}.   The class $(\rm D)$ condition is first proposed in \cite{hu2005utility} for time-additive utility. The class (D) property of $\mathcal{W}^{1-\gamma}e^Y$ is to ensure that $V$ is well-defined. From an economic perspective, admissibility, particularly the class (D) property, ensures that consumption-investment strategies remain financially viable over time. 
The reader can refer to Proposition 2.1 in \cite{matoussi2018convex} for general discussion on admissible strategies. 

The Epstein-Zin utility  maximization problem is the following: 
\begin{align}\label{max}
V_0:=\sup_{(c,\pi)\in\mathcal{A}}V_0^{c,\pi}=\sup_{(c,\pi)\in\mathcal{A}}\mathbf{E}\left[\int_0^T f(c_s, V_s^{c,\pi})ds+V(\mathcal{W}_T)\right].
\end{align}


\section{Main results}
In this section, with the help of the elaborate and effective priori estimate technique of BSDE, Proposition \ref{Y} investigates a specific BSDE \eqref{mar} carefully, which is acquired by the martingale optimal principle. Then, the verification theorem is established for the candidate optimal strategy and the optimal value function (see Theorem \ref{main}) in a non-Markovian market environment using the results of Proposition \ref{Y}. Finally, some connections with Epstein-Zin's dual result are also discussed.   
\subsection{Consumption-investment optimization}\label{mop}

In accordance with the special formulation of Epstein-Zin utility, for any $(c,\pi)\in \mathcal{A}$,  we construct a utility process:
\begin{align}\label{upo}
G_t^{c,\pi}:=\frac{\mathcal{W}_t^{1-\gamma}}{1-\gamma}e^{Y_t}+\int_0^tf\left(c_s,\frac{\mathcal{W}_s^{1-\gamma}}{1-\gamma}e^{Y_s}\right)ds,\quad t\in[0,T],
\end{align}
where $\mathcal{W}$ is the wealth process in (\ref{wealth_process}) and $Y$ satisfies, for $t\in [0,T]$, 
\begin{align}\label{mar}
 Y_t=\int_t^TH(s,Y_s,Z_s)ds-\int_t^TZ_sdW_s,
\end{align}
where 
\begin{align}\label{generator-complete}
H(t,y,z)=&z\left(\frac{1}{2}I_n+\frac{1-\gamma}{2\gamma}\sigma_t'\Sigma_t^{-1}\sigma_t\right)z'
+\frac{1-\gamma}{\gamma}\mu_t'\Sigma_t^{-1}\sigma_tz'+\frac{\theta}{\psi}\delta^\psi e^{-\frac{\psi}{\theta}y}\notag\\
&+\frac{1-\gamma}{2\gamma}\mu_t'\Sigma_t^{-1}\mu_t+(1-\gamma) r_{t}-\delta \theta.
\end{align}

\begin{remark}
$H$ is chosen to ensure $G^{c,\pi}$ in  (\ref{upo})   is a local supermartingale for any control $(c,\pi)$ and a local martingale for the candidate optimal control $(c^*,\pi^*)$ under the parameter specification $\gamma, \psi>1$, see Proposition \ref{g}. Similar equations were first introduced by \cite{hu2005utility} for the time-additive utility maximization in an incomplete market, where they constructed BSDEs whose generators are quadratic in $z$ but don't contain $y$. The generator \eqref{generator-complete} has a quadratic term in $z$ and also an exponential term in $y$, which is the same as equation (2.13) in \cite{xing2017consumption}. 
\end{remark}

The next result plays an important role in proving the martingale property in Theorem \ref{main}. 
Compared with Proposition 2.9 in \cite{xing2017consumption}, where the author imposed the condition that $r+\frac{\mu'\Sigma^{-1}\mu}{2\gamma}$ is bounded from below, our integrability condition is stronger as we need it to establish better prior estimates of the solutions that would help us to bypass Lyapunov's arguments for non-Markovian models.

 
\begin{proposition}\label{Y}
Suppose that $\gamma, \psi>1$ and Assumption \ref{1} holds. Then BSDE (\ref{mar}) admits a unique solution $(Y,Z)$ such that 
\begin{align}\label{exin}
\mathbf{E}\left[e^{2p(Y_\cdot^+)_*}+e^{q(Y_\cdot^-)_*}+\int_0^T|Z_s|^2ds\right]<+\infty,
\end{align}
where $(Y_\cdot)_*:=\sup_{t\in[0,T]}|Y_t|$. Moreover, both 
\begin{align}\label{exinD1}
\left\{\exp\left(2pY_t^++2p(\gamma-1)\int_0^tr_s^-ds\right)\right\}_{t\in[0,T]}
\end{align}
and
\begin{align}\label{exinD2}
\left\{\exp\left(qY_t^-+q(\gamma-1)\int_0^t\left(r^+_s+\frac{1}{\gamma}\mu'_s\Sigma_s^{-1}\mu_s\right)ds\right)\right\}_{t\in[0,T]}
\end{align} 
are of class $(\rm D)$ with $p$ and $q$  given in Assumption \ref{11}. Furthermore, assume Assumption \ref{2} also holds, then $Y$ is bounded from above by a constant.
\end{proposition}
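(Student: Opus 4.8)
The plan is to first rewrite the generator so that its monotonicity and growth become transparent, then to derive two-sided exponential a priori estimates, and finally to read off existence, uniqueness, the class (D) claims, and the upper bound under Assumption~\ref{2} from those estimates. The reduction step is to complete the square in $z$. Decomposing $z=z^{\parallel}+z^{\perp}$ along the orthogonal projection $\sigma_t'\Sigma_t^{-1}\sigma_t$ and noting that $M_t:=\tfrac12 I_n+\tfrac{1-\gamma}{2\gamma}\sigma_t'\Sigma_t^{-1}\sigma_t$ has eigenvalues $\tfrac{1}{2\gamma}$ and $\tfrac12$ (so $\tfrac{1}{2\gamma}|z|^2\le zM_tz'\le\tfrac12|z|^2$), one absorbs the cross term into the square to obtain $H(t,y,z)=\tfrac12|z^{\perp}|^2+\tfrac{1}{2\gamma}|z^{\parallel}+(1-\gamma)\sigma_t'\Sigma_t^{-1}\mu_t|^2+\phi(y)+g_t$, where $\phi(y)=\tfrac{\theta}{\psi}\delta^\psi e^{-\frac{\psi}{\theta}y}\le0$ is \emph{decreasing} in $y$ and $g_t=\tfrac{1-\gamma}{2}\mu_t'\Sigma_t^{-1}\mu_t+(1-\gamma)r_t-\delta\theta$. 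The terminal value is $Y_T=0$. Two facts are recorded for later: the $z$-quadratic part is nonnegative, and $\phi$ is bounded on $\{y\le0\}$ (namely $\phi(0)\le\phi(y)<0$) while $\phi(y)\to-\infty$ as $y\to+\infty$.

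For the upper estimate controlling $Y^+$, since $\phi\le0$ the equation with generator $\widetilde H(t,z):=H(t,y,z)-\phi(y)$, which no longer depends on $y$, is a quadratic BSDE dominating \eqref{mar}, so by comparison its solution $\widetilde Y$ satisfies $\widetilde Y\ge Y$. I would bound $\widetilde Y$ from above through the transform $e^{\lambda\widetilde Y}$ with $\lambda\in(0,\tfrac1\gamma)$: It\^o's formula produces the drift $e^{\lambda\widetilde Y}[-\lambda\widetilde H+\tfrac12\lambda^2|Z|^2]$, and the eigenvalue bound $zM_tz'\ge\tfrac{1}{2\gamma}|z|^2$ together with a Young split of the cross term lets me fix $\lambda$ so the net $|Z|^2$-coefficient is nonpositive, leaving only a term in $\mu_t'\Sigma_t^{-1}\mu_t$ and $g_t$. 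After localization and conditional expectation, $e^{\lambda\widetilde Y_t}$ is dominated by $\mathbf{E}[\exp(\lambda\int_t^T((\gamma-1)r_s^-+C\mu_s'\Sigma_s^{-1}\mu_s)\,ds)\mid\mathscr{F}_t]$; raising to a high power (which, since $\lambda<\tfrac1\gamma$, is exactly why Assumption~\ref{1} carries the factor $2p\gamma$ on $\int r^-$) and applying H\"older with the moments in \eqref{11} yields $\mathbf{E}[e^{2p(Y_\cdot^+)_*}]<\infty$ and the class (D) property of \eqref{exinD1}. Dually, dropping the nonnegative quadratic part gives the pointwise lower bound $H(t,y,z)\ge\phi(y)+g_t$, so $Y$ dominates the solution of the $z$-free equation with generator $\phi(y)+g_t$; because $\phi$ is bounded below precisely on the region $\{y\le0\}$ that governs $Y^-$, one discards $\phi$ up to an additive constant and gets $Y_t\ge\mathbf{E}[\int_t^T g_s\,ds\mid\mathscr{F}_t]-C$ with $g_s\ge-(\gamma-1)r_s^+-\tfrac{\gamma-1}{2}\mu_s'\Sigma_s^{-1}\mu_s-\delta\theta$. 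Exponentiating $Y^-$ to the power $q$, using conditional Jensen, and invoking the moments $\exp(q(\gamma-1)\int r^+)$ and $\exp(q\int\mu'\Sigma^{-1}\mu)$ of Assumption~\ref{1} gives $\mathbf{E}[e^{q(Y_\cdot^-)_*}]<\infty$ and the class (D) property of \eqref{exinD2}. The bound $\mathbf{E}\int_0^T|Z_s|^2\,ds<\infty$ then follows from It\^o's formula applied to $Y^2$ (or to a truncated transform), the BDG inequality, and the two-sided exponential bounds just obtained.

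Existence follows by truncating the unbounded coefficients $r$ and $\mu$, solving the resulting bounded quadratic BSDEs by the Kobylanski--Briand--Hu theory, observing that the a priori estimates above are uniform in the truncation, and passing to the limit by monotone stability as in \cite{xing2017consumption}; uniqueness uses the comparison principle, which applies since $\phi$ is decreasing in $y$. The class (D) statements are recast as uniform integrability of the explicit super/sub-martingale transforms in \eqref{exinD1}--\eqref{exinD2}, closed by a de la Vall\'ee--Poussin argument that consumes the slack in the exponents (the strict inequalities $p>1$ and $q>\max\{\cdots,1\}$). Finally, under Assumption~\ref{2} the only upward driver $(\gamma-1)r^-$ becomes bounded, so in the upper estimate the conditional expectation is dominated by a deterministic constant; equivalently a constant supersolution of the $\phi$-free equation exists, yielding $Y\le\widetilde Y\le C$ for a deterministic $C$.

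The step I expect to be the main obstacle is the simultaneous control of the quadratic $z$-growth and the unbounded coefficients inside the exponential estimates: one must choose each transform exponent so that the $|Z|^2$-terms cancel after the Young split of the cross term, while the leftover coefficient functionals $\int r^\pm$ and $\int\mu'\Sigma^{-1}\mu$ land exactly on the moments postulated in \eqref{11} (this is the source of the precise exponents $2p\gamma$, $q(\gamma-1)$ and $q$ there). Equally delicate is verifying that the stochastic integrals arising in each transform are true martingales rather than merely local ones, so that the conditional-expectation bounds are legitimate; this is precisely what the exponential integrability \eqref{exin} secures.
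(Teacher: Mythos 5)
Your overall architecture (truncate, derive uniform two-sided exponential a priori estimates, pass to the limit, read off class (D) by uniform integrability) parallels the paper's, but there are two genuine gaps. The most serious is uniqueness. You dispose of it by invoking ``the comparison principle, which applies since $\phi$ is decreasing in $y$.'' For quadratic BSDEs whose solutions are only exponentially integrable (not bounded), comparison and uniqueness are known to fail in general; monotonicity in $y$ does not remove the obstruction, which comes from the quadratic growth in $z$ combined with the unboundedness of $Y$ and of $\int_0^T|Z_s|^2ds$. The paper devotes a separate lemma to this point: it exploits the convexity of $H(t,y,\cdot)$, forms the Legendre--Fenchel transform $J(t,y,l)$, represents $Y=\operatorname{essinf}_{u\in\mathscr{A}}Y^u$ over a class of Girsanov densities with finite relative entropy (the Delbaen--Hu--Richou method), and then must \emph{verify} that the optimizing density $u^*=H'_z(s,Y_s,Z_s)$ lies in $\mathscr{A}$ --- itself a nontrivial relative-entropy/de la Vall\'ee Poussin argument consuming the moments of Assumption \ref{1}. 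None of this is replaced by a comparison theorem, so your uniqueness step is missing an idea, not just a detail.

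The second gap is in the exponent bookkeeping of your a priori estimates. For the upper bound on $Y^+$ the It\^o correction $\tfrac{\lambda^2}{2}|Z|^2$ of $e^{\lambda Y}$ must dominate the generator's quadratic part, whose coefficient after the Young split is $\tfrac{2\gamma-1}{2\gamma}$; this forces $\lambda\ge\tfrac{2\gamma-1}{\gamma}>1$ (whence the paper's condition $2p>\tfrac{2\gamma-1}{\gamma}$ in \eqref{psi}), whereas you take $\lambda\in(0,\tfrac{1}{\gamma})$ and ask for a nonpositive $|Z|^2$-coefficient --- the wrong direction for a submartingale upper bound. For the lower bound, discarding the entire nonnegative quadratic gives the sharp completed-square constant $\tfrac{\gamma-1}{2}$ on $\int\mu'_s\Sigma_s^{-1}\mu_s\,ds$, which does not match the coefficient $\tfrac{\gamma-1}{\gamma}$ appearing in \eqref{exinD2} and, for $\gamma>3$ (indeed already for $\gamma>2$ once H\"older slack is accounted for), requires an exponential moment of $\mu'\Sigma^{-1}\mu$ of order strictly larger than the $q$ postulated in \eqref{11}. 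The paper avoids this by \emph{retaining} the residual quadratic term $\tfrac{1-\gamma}{2\gamma}|z|^2$ on $\{y\le0\}$ and letting the test function's second-order term $\tfrac{q^2}{2}|Z|^2$ absorb it (this is where $q>\tfrac{\gamma-1}{\gamma}$ enters), which is precisely what produces the coefficient $\tfrac{\gamma-1}{\gamma}$ compatible with Assumption \ref{1}. Your closing remark that the true-martingale property of the transforms ``is precisely what \eqref{exin} secures'' is also circular as stated; the paper sidesteps this by performing all estimates on the triply truncated BSDEs, whose solutions are bounded, and only then passing to the limit via the Briand--Hu localization.
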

\begin{remark}
Motivated by \cite{fan2020uniqueness,fan2023user}, we use the test function method and the priori estimate technique, which are standard procedures for the existence of solutions to BSDEs. We find that BSDE (\ref{mar}) is concave with respect to $Y$; it is therefore natural to use the convex-dual method proposed by \cite{delbaen2011uniqueness} to prove the uniqueness of $(Y, Z)$. 
\end{remark}
\begin{remark}
It follows from (\ref{exin}) that the solution $Y$ to BSDE (\ref{mar}) is exponentially integrable, and $Z$ is square integrable, both under the original probability measure $\mathbb{P}$. We consider the positive and negative parts of $Y$ separately to examine its exponential integrability. In particular, the exponential integrability of $Y^+$ plays a crucial role in proving the martingale property in (\ref{marineq}). Moreover, $e^{Y^+}$ and $e^{Y^-}$ possess a stronger property of being uniformly integrable, as shown in (\ref{exinD1}) and (\ref{exinD2}).
\end{remark}

The following proposition presents the martingale optimal principle under  Assumption \ref{1}.  We reformulate it from the proof of Lemma B.1 and Theorem 2.14 of \cite{xing2017consumption} in our admissible class $\mathcal{A}$. It is sufficient to show the existence of the optimality that $G^{c,\pi}$ is a local supermartingale and $G^{c^*,\pi^*}$ is a local martingale, thanks to the special structure of $G^{c,\pi}$.

\begin{proposition}\label{g}
Suppose Assumption \ref{1} holds and  $\gamma, \psi>1$.  Let $(Y,Z)$ be the solution to BSDE (\ref{mar}). Then
\begin{enumerate}[{(i)}]
\item For any $(c,\pi)\in\mathcal{A}$, we have $G^{c,\pi}$ is a local supermartingale.

\item Define the candidate optimal strategy
\begin{align}\label{os}
c^*=\delta^\psi e^{-\frac{\psi}{\theta}Y}\mathcal{W}^*,\quad \pi^*=\frac{1}{\gamma}\Sigma^{-1}\left(\mu+\sigma Z'\right),
\end{align}
then $G^{c^*,\pi^*}$ is a local martingale. Furthermore, if $(c^*,\pi^*)\in\mathcal{A}$, then $(c^*,\pi^*)$ is the optimal strategy for problem (\ref{max}). 
\end{enumerate}
\end{proposition}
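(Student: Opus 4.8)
The plan is to derive both statements from a single Itô computation that expresses the drift of $G^{c,\pi}$ as a strictly positive factor multiplying a concave function of the control whose maximum equals $0$ and is attained exactly at $(c^*,\pi^*)$. First I would set $M_t:=\frac{\mathcal{W}_t^{1-\gamma}}{1-\gamma}e^{Y_t}$, the leading term of $G^{c,\pi}$ in (\ref{upo}), and apply Itô's formula using the wealth dynamics (\ref{wealth_process}) and the BSDE (\ref{mar}). The diffusion term comes out as $\mathcal{W}_t^{1-\gamma}e^{Y_t}\left(\pi_t'\sigma_t+\tfrac{1}{1-\gamma}Z_t\right)dW_t$, which is a continuous local martingale, so everything reduces to the sign of the $dt$-drift. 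After adding the running aggregator $f(c_t,M_t)$ and inserting the explicit generator (\ref{generator-complete}), several terms cancel by design — notably those carrying $r_t$ and $|Z_t|^2$, and the linear part $-\delta\theta\,v$ of $f$ against the $-\delta\theta$ term of $H$ — leaving a drift of the form
\begin{align*}
\mathcal{W}_t^{1-\gamma}e^{Y_t}\left[\,\Pi_t(\pi_t)+C_t(c_t)\,\right],
\end{align*}
where $\mathcal{W}_t^{1-\gamma}e^{Y_t}>0$ since $\gamma>1$ and $\mathcal{W}_t>0$.

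Next I would maximize the two pieces separately. Here $\Pi_t$ gathers the $\pi$-dependent terms $\pi'\mu_t-\tfrac{\gamma}{2}\pi'\Sigma_t\pi+\pi'\sigma_tZ_t'$ together with the control-free quadratic terms in $(\mu_t,Z_t)$; it is strictly concave in $\pi$ because $\Sigma_t$ is positive definite and $\gamma>0$, its gradient vanishes at $\pi_t^*=\tfrac{1}{\gamma}\Sigma_t^{-1}(\mu_t+\sigma_tZ_t')$, and completing the square shows its maximal value is exactly $0$. Writing $x=c/\mathcal{W}$, the consumption part is $C_t(x)=-x+\tfrac{\delta}{1-1/\psi}x^{1-1/\psi}e^{-\frac{1}{\theta}Y_t}-\tfrac{\delta^\psi}{\psi-1}e^{-\frac{\psi}{\theta}Y_t}$, which is concave for $\psi>1$; its first-order condition gives $x_t^*=\delta^\psi e^{-\frac{\psi}{\theta}Y_t}$, i.e. $c_t^*=\delta^\psi e^{-\frac{\psi}{\theta}Y_t}\mathcal{W}_t^*$, and substituting back returns the maximal value $0$. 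The two maximizers are thus precisely the candidate strategy (\ref{os}); since $\mathcal{W}_t^{1-\gamma}e^{Y_t}>0$, the drift is $\le0$ for every $(c,\pi)$, which proves (i), and it vanishes at $(c^*,\pi^*)$, which proves the local-martingale claim in (ii).

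For the final optimality assertion I would use admissibility to upgrade these local statements. When $(c^*,\pi^*)\in\mathcal{A}$, the class (D) properties encoded in $\mathcal{A}$ turn $G^{c,\pi}$ into a true supermartingale and $G^{c^*,\pi^*}$ into a true martingale. Because $Y_T=0$ forces $M_T=V(\mathcal{W}_T)$, the martingale property rewrites as $M_t^*=\mathbf{E}\!\left[\int_t^T f(c_s^*,M_s^*)\,ds+V(\mathcal{W}_T^*)\,\middle|\,\mathscr{F}_t\right]$, which is exactly the defining equation (\ref{EZuc}); uniqueness of the Epstein-Zin utility then gives $V^{c^*,\pi^*}=M^*$ and in particular $V_0^{c^*,\pi^*}=M_0=\frac{\omega^{1-\gamma}}{1-\gamma}e^{Y_0}$. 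For a general admissible $(c,\pi)$ the supermartingale property yields $M_t\ge\mathbf{E}\!\left[\int_t^T f(c_s,M_s)\,ds+V(\mathcal{W}_T)\,\middle|\,\mathscr{F}_t\right]$, so $M$ is a supersolution of the utility equation; a comparison argument in the regime $\gamma,\psi>1$, in the spirit of Lemma B.1 and Theorem 2.14 of \cite{xing2017consumption}, then delivers $V^{c,\pi}\le M$ and hence $V_0^{c,\pi}\le M_0=V_0^{c^*,\pi^*}$, establishing optimality.

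The step I expect to be the main obstacle is not the Itô bookkeeping but this last passage from the pointwise drift inequality to an inequality between the recursive utilities. The class (D) conditions in $\mathcal{A}$ are precisely what promote the local (super)martingales to genuine ones, and the comparison principle has to be applied to the non-Lipschitz aggregator $f(c,\cdot)$, for which I would rely on the established Epstein-Zin SDU existence, uniqueness and comparison theory rather than re-deriving it. Note that the exponential-moment estimates of Proposition \ref{Y} are not needed for the local statements proved here; they enter afterwards, in Theorem \ref{main}, to show that the candidate local martingale is in fact a true martingale without appealing to a Markovian or Lyapunov structure.
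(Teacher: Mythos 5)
Your proposal is correct and follows essentially the same route as the paper: the same It\^o computation exhibiting the drift of $G^{c,\pi}$ as a sign-definite factor times a concave function of $(\hat c,\pi)$ maximized at $(c^*,\pi^*)$ with value $0$ (your $\Pi_t$ and $C_t$ are exactly the paper's bracket after dividing out $1-\gamma$), followed by localization and the class (D) conditions in $\mathcal{A}$ to pass from the local statements to optimality. Your finishing step is, if anything, slightly more explicit than the paper's, which rearranges the localized expectations and applies monotone convergence while deferring the supersolution-to-$V^{c,\pi}$ comparison to the cited Lemma B.1 and Theorem 2.14 of Xing (2017) — the same external input you invoke.
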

\begin{proof}
For any $(c,\pi)\in\mathcal{A}$,
recalling from (\ref{upo}) that
\begin{align}\label{up}
dG_t^{c,\pi}=&\frac{\mathcal{W}_t^{1-\gamma}}{1-\gamma}e^{Y_t}\Bigg[\left(Z_t+(1-\gamma)\pi_t'\sigma_t\right)dW_t+\big(\delta\theta e^{-\frac{Y_t}{\theta}}\hat{c}_t^{1-\frac{1}{\psi}}-(1-\gamma)\hat{c}_t+(1-\gamma)\pi_t'(\mu_t+\sigma_t Z_t')\notag\\
&-\frac{\gamma(1-\gamma)}{2})\pi_t'\Sigma_t\pi_t-H(t,Y_t,Z_t)
+(1-\gamma)r_t-\delta\theta+\frac{1}{2}Z_tZ_t'\big)dt\Bigg],
\end{align}
where $\hat{c}=\frac{c}{\mathcal{W}}$. The drift term in (\ref{up}) is a concave function of $(\hat c, \pi)$ and the maximum is achieved at $(c^*,\pi^*)$ in (\ref{os}) with the maximum value 0 thanks to the definition of  $H$ in (\ref{generator-complete}). Therefore, $G_\cdot^{c,\pi}$ is a local supermartingale for any $(c,\pi)\in\mathcal{A}$ and  a local martingale for  $(c^*,\pi^*)$. Taking an appropriate stopping time sequence $\{\varsigma_n, n\geq 1\}$ with $\varsigma_{n}\stackrel{a.s.}\longrightarrow T$, for any $(c,\pi)\in\mathcal{A}$, we have
\begin{align*}
\mathbf{E}\left[G_{\varsigma_n}^{c,\pi}\right]&=\mathbf{E}\left[\int_0^{\varsigma_n}f\left(c_s,\frac{\left(\mathcal{W}_s\right)^{1-\gamma}}{1-\gamma}e^{Y_s}\right)ds+\frac{\left(\mathcal{W}_{\varsigma_n}\right)^{1-\gamma}}{1-\gamma}e^{Y_{\varsigma_n}}\right]\\
& \leq G_0^{c,\pi}=G_0^{c^*,\pi^*}
=\mathbf{E}\left[G_{\varsigma_n}^{c^*,\pi^*}\right]=\mathbf{E}\left[\int_0^{\varsigma_n}f\left(c^*_s,\frac{\left(\mathcal{W}^*_s\right)^{1-\gamma}}{1-\gamma}e^{Y_s}\right)ds+\frac{\left(\mathcal{W}^*_{\varsigma_n}\right)^{1-\gamma}}{1-\gamma}e^{Y_{\varsigma_n}}\right].
\end{align*}
This implies that
\begin{align*}
&\mathbf{E}\left[\frac{(\mathcal{W}_{\varsigma_n})^{1-\gamma}}{1-\gamma}e^{Y_{\varsigma_n}}+\int_0^{\varsigma_n}\delta\frac{\left(c_s\right)^{1-\frac{1}{\psi}}}{1-\frac{1}{\psi}}\left((\mathcal{W}_s)^{1-\gamma}e^{Y_s}\right)^{1-\frac{1}{\theta}}ds\right]+\delta\theta\mathbf{E}\left[\int_0^{\varsigma_n}\frac{(\mathcal{W}_s^*)^{1-\gamma}}{1-\gamma}e^{Y_s}ds\right]\\
\leq &\mathbf{E}\left[\frac{(\mathcal{W}_{\varsigma_n}^*)^{1-\gamma}}{1-\gamma}e^{Y_{\varsigma_n}}+\int_0^{\varsigma_n}\delta\frac{\left(c_s^*\right)^{1-\frac{1}{\psi}}}{1-\frac{1}{\psi}}\left((\mathcal{W}_s^*)^{1-\gamma}e^{Y_s}\right)^{1-\frac{1}{\theta}}ds\right]+\delta\theta\mathbf{E}\left[\int_0^{\varsigma_n}\frac{(\mathcal{W}_s)^{1-\gamma}}{1-\gamma}e^{Y_s}ds\right].
\end{align*}
If $(c^*,\pi^*)\in\mathcal{A}$, sending $n\to\infty$, the monotone convergence theorem and the class (D) property of $(\mathcal{W})^{1-\gamma}e^{Y}$ and $(\mathcal{W}^*)^{1-\gamma}e^{Y}$ yields
\begin{align}\label{maxuti}
&\sup_{(c,\pi)\in\mathcal{A}}\mathbf{E}\left[\int_0^{T}f\left(c_s,\frac{\left(\mathcal{W}_s\right)^{1-\gamma}}{1-\gamma}e^{Y_s}\right)ds+\frac{\left(\mathcal{W}_{T}\right)^{1-\gamma}}{1-\gamma}\right]
\notag\\= &
\mathbf{E}\left[\int_0^{T}f\left(c^*_s,\frac{\left(\mathcal{W}^*_s\right)^{1-\gamma}}{1-\gamma}e^{Y_s}\right)ds+\frac{\left(\mathcal{W}^*_{T}\right)^{1-\gamma}}{1-\gamma}\right].
\end{align}
Then $(c^*,\pi^*)$ is the optimal strategy for problem (\ref{max}).
\end{proof}

By Proposition \ref{g},  
it remains to verify the candidate optimal strategy $(c^*,\pi^*)\in \mathcal{A}$, defined by \eqref{os}.  Different from the Markovian environments and Lyapunov arguments in \cite{xing2017consumption},  we verify it by the crucial martingale property along with BSDE techniques.  Thereby, our result admits the non-Markovian situation. The maximization (\ref{max}) is determined by Proposition \ref{g} once we complete the verification of the optimal strategy $(c^*,\pi^*)\in\mathcal{A}$. To achieve it, we need Assumption \ref{2} to obtain the admissibility of $(c^*,\pi^*)$.

\begin{theorem}\label{main}
Suppose Assumption \ref{1} holds and  $\gamma, \psi>1$.  Let $(Y, Z)$ be the solution to BSDE (\ref{mar}). Then the stochastic exponential 
\begin{align}\label{mse}
Q:=\mathscr{E}\left(\int\Big((1-\gamma)(\pi_s^*)'\sigma_s+Z_s\Big)dW_s\right)
\end{align}
is a true martingale, or equivalently, of class $(\rm D)$ under $\mathbb{P}$ where $\pi^*$ is defined in (\ref{os}). Furthermore, if Assumption \ref{2} holds as well, then $(c^*,\pi^*)\in\mathcal{A}$. For the initial wealth $\omega$, the optimal value function is given by
\begin{align}\label{main2}
V_0^{c^*,\pi^*}=\frac{\omega^{1-\gamma}}{1-\gamma}e^{Y_0}.
\end{align}
\end{theorem}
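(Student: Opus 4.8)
The plan is to establish the three assertions in order, with the true-martingale property of $Q$ being the crux on which the admissibility and the value formula both rest. Since the process $Q$ in (\ref{mse}) is a nonnegative local martingale with $Q_0=1$, it is automatically a supermartingale, so $\mathbf{E}[Q_T]\le 1$ with equality if and only if $Q$ is a uniformly integrable (hence true, hence class (D)) martingale; this equivalence is standard for nonnegative martingales on a finite horizon. I would therefore aim to show uniform integrability of the family $\{Q_\tau\}$ by controlling the relative entropy $H(\mathbb{Q}^n|\mathbb{P})$ along a localizing sequence and invoking de la Vall\'ee Poussin. Concretely, fix a localizing sequence $\tau_n\uparrow T$ along which $Q$ is a true martingale and the relevant stochastic integrals are controlled, and let $d\mathbb{Q}^n=Q_{\tau_n}\,d\mathbb{P}$, a genuine probability measure.

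Writing $\phi:=(1-\gamma)(\pi^*)'\sigma+Z$ for the integrand of (\ref{mse}) and applying Girsanov, $W^{\mathbb{Q}^n}:=W-\int_0^{\cdot\wedge\tau_n}\phi_s'\,ds$ is a $\mathbb{Q}^n$-Brownian motion, and a short computation gives $H(\mathbb{Q}^n|\mathbb{P})=\mathbf{E}[Q_{\tau_n}\log Q_{\tau_n}]=\tfrac12\mathbf{E}^{\mathbb{Q}^n}[\int_0^{\tau_n}|\phi_s|^2\,ds]$. Rewriting BSDE (\ref{mar}) under $\mathbb{Q}^n$, whose generator becomes $H-Z\phi'$, the linear-in-$Z$ term cancels precisely by the choice of $\pi^*$ in (\ref{os}), leaving a generator whose quadratic part is dominated by $-\tfrac{1}{2\gamma}|Z|^2$ (using $\sigma'\Sigma^{-1}\sigma\preceq I_n$) and whose remaining terms consist of the manifestly nonpositive $\tfrac{\theta}{\psi}\delta^\psi e^{-\frac{\psi}{\theta}Y}$ and $\tfrac{1-\gamma}{2\gamma}\mu'\Sigma^{-1}\mu$ together with $(1-\gamma)r-\delta\theta$. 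Taking $\mathbb{Q}^n$-expectations of the BSDE and rearranging then bounds $\mathbf{E}^{\mathbb{Q}^n}[\int_0^{\tau_n}|Z|^2\,ds]$ and, after absorbing the cross terms by Cauchy--Schwarz, the entropy itself, by $-Y_0+\mathbf{E}^{\mathbb{Q}^n}[(Y^+_\cdot)_*]+(\gamma-1)\mathbf{E}^{\mathbb{Q}^n}[\int_0^{\tau_n}r_s^-\,ds]+c\,\mathbf{E}^{\mathbb{Q}^n}[\int_0^{\tau_n}\mu_s'\Sigma_s^{-1}\mu_s\,ds]$ plus a constant.

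The decisive and hardest step is to close this estimate. Each $\mathbb{Q}^n$-expectation on the right is handled by the Gibbs variational inequality $\mathbf{E}^{\mathbb{Q}^n}[X]\le \tfrac{1}{\lambda}H(\mathbb{Q}^n|\mathbb{P})+\tfrac{1}{\lambda}\log\mathbf{E}[e^{\lambda X}]$, where the logarithmic terms are finite exactly because of the exponential moments $\mathbf{E}[e^{2p(Y^+_\cdot)_*}]$ from (\ref{exin}) in Proposition \ref{Y} together with $\mathbf{E}[\exp(2p\gamma\int_0^T r_s^-\,ds)]$ and $\mathbf{E}[\exp(q\int_0^T\mu_s'\Sigma_s^{-1}\mu_s\,ds)]$ from Assumption \ref{1}. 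The subtlety is that the admissible exponents $\lambda$ are capped by these moments, so to force the total coefficient of $H(\mathbb{Q}^n|\mathbb{P})$ to be strictly below one---which is what lets the entropy be absorbed into the left side and yields $\sup_n H(\mathbb{Q}^n|\mathbb{P})<\infty$---one needs exactly the quantitative relation between $p$, $q$ and $\gamma$ built into Assumption \ref{1}; I expect this constant-chasing to be the main obstacle. Once $\sup_n H(\mathbb{Q}^n|\mathbb{P})<\infty$, de la Vall\'ee Poussin (using $x\log x\ge -e^{-1}$) gives uniform integrability of $\{Q_{\tau_n}\}$, so $Q_{\tau_n}\to Q_T$ in $L^1$, $\mathbf{E}[Q_T]=1$, and $Q$ is a true martingale of class (D).

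For the admissibility $(c^*,\pi^*)\in\mathcal{A}$ under the additional Assumption \ref{2}, I would apply It\^o's formula to obtain $(\mathcal{W}^*_t)^{1-\gamma}e^{Y_t}=\omega^{1-\gamma}e^{Y_0}Q_t\exp(\int_0^t D_s\,ds)$, where a direct computation using the optimality of $(c^*,\pi^*)$ simplifies the finite-variation exponent to $D=\theta(\delta-\hat c^*)$ with $\hat c^*=\delta^\psi e^{-\frac{\psi}{\theta}Y}$. Since Assumption \ref{2} forces $Y$ to be bounded above (Proposition \ref{Y}) and $-\psi/\theta>0$, the rate $\hat c^*$, and hence $D$, is bounded above, so $(\mathcal{W}^*)^{1-\gamma}e^{Y}$ is dominated by a constant multiple of the class (D) process $Q$ and is therefore itself of class (D). This makes $V^{c^*,\pi^*}$ well defined and promotes the local martingale $G^{c^*,\pi^*}$ to a true martingale; evaluating $G^{c^*,\pi^*}_t=\mathbf{E}[G^{c^*,\pi^*}_T\,|\,\mathscr{F}_t]$ and using $Y_T=0$ yields the representation $V^{c^*,\pi^*}_t=\tfrac{(\mathcal{W}^*_t)^{1-\gamma}}{1-\gamma}e^{Y_t}<0$, which verifies all requirements in (\ref{adset}); taking $t=0$ with $\mathcal{W}^*_0=\omega$ gives the value function (\ref{main2}).
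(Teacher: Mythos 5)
Your proposal is correct and follows essentially the same route as the paper: localize, identify the relative entropy $\mathbf{E}[Q_{\tau_n}\ln Q_{\tau_n}]=\tfrac12\mathbf{E}^{\mathbb{Q}_n}[\int_0^{\tau_n}|\phi_s|^2ds]$, rewrite BSDE (\ref{mar}) under $\mathbb{Q}_n$, absorb the $\mathbb{Q}_n$-expectations of $(Y^+_\cdot)_*$, $\int r^-$ and $\int\mu'\Sigma^{-1}\mu$ back into the entropy via Fenchel/Gibbs duality with the exponential moments of Assumption \ref{1} and Proposition \ref{Y}, check that the resulting coefficient $1-\tfrac{1}{2p}+\tfrac{1-\gamma}{2p\gamma}+\tfrac{(1-\gamma)(\gamma+2)}{2\gamma^2 q}$ is positive, and conclude by de la Vall\'ee Poussin; Step 2 likewise matches the paper's multiplicative representation of $(\mathcal{W}^*)^{1-\gamma}e^{Y}$ and the upper bound on $Y$. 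The only point stated too quickly is the passage from the class (D) property of $(\mathcal{W}^*)^{1-\gamma}e^{Y}$ to the martingale identity for $G^{c^*,\pi^*}$, which in the paper is justified by a sign-splitting and monotone convergence argument for the $\int f\,ds$ term rather than by directly ``promoting'' the local martingale.
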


\begin{remark}
Note that the martingale property of $Q$ does not require $r$ to be bounded from below. Assumption \ref{2} ($r^-$ is bounded) is only needed in Step 2 of the proof of Theorem \ref{main} to show that the candidate strategy $(c^*, \pi^*)\in \mathcal{A}$. 

We work in a non-Markovian market, which is more sophisticated than the Markovian one.  Theorem \ref{main} guarantees the martingale property without the help of Lyapunov arguments utilized by \cite{xing2017consumption}.  The martingale property holds even without the condition on $r+\frac{\mu'\Sigma^{-1}\mu}{2\gamma}$, which ensures the upper boundedness of $Y$ and is crucial for Lyapunov's arguments. 

Compared with Lemma B.2 in \cite{xing2017consumption},  our proof is more straightforward, which only depends on the exponential integrability of $Y$ and the relative entropy method.  \cite{hu2018exponential} solved an exponential utility maximization problem with unbounded payoffs using a similar technique.
\end{remark}

\begin{remark}
Proposition \ref{g} holds for a large admissible class $\mathcal{A}$, which allows us to remove several technical conditions on market parameters such as Assumption 2.11 in \cite{xing2017consumption}. 
\end{remark}


\subsection{Convex dual for Epstein-Zin optimization}\label{convex-dual}
A dual problem for a consumption-investment optimization with Epstein-Zin utility is a minimization problem of a convex functional of state price densities. Recall the state price density process $D$ satisfies that $D_0=1$, $D>0$, $D\mathcal{W}+\int_0^\cdot D_sc_sds$ is a supermartingale for $(c,\pi)\in\widehat{\mathcal{A}}$ under $\mathbb{P}$, where 
$$\widehat{\mathcal{A}}=\left\{(c,\pi): V^{c,\pi} {\rm\ exists\ }, V^{c,\pi}<0, V^{c,\pi} {\rm\ is\ of\ class\ (D)}\right\}.$$
The set of all state price density processes is denoted by $\mathcal{D}$.  Compared with (\ref{adset}), $\widehat{\mathcal{A}}$ is larger due to the duality, which consists of all nonnegative self-financing wealth processes. Note that
$$D^0_t:=e^{-\int_0^tr_sds}\mathcal{E}\left(\int-\mu_s'\Sigma_s^{-1}\sigma_s dW_s\right)_t,~~~0\leq t\leq T,$$
is a state price density, which leads to no arbitrage in the market setting.

The dual version for the Epstein-Zin utility is extracted from Section 2.3 of \cite{matoussi2018convex}. For any $D\in \mathcal{D}$ and $y>0$, the Epstein-Zin stochastic differential dual for $yD$ over a time horizon $T$ is a semimartingale $U^{yD}$ that satisfies
\begin{align}\label{eq:dual-u}
U^{yD}_t=\mathbf{E}\left[\int_t^T g\left(yD_s, \frac{1}{\gamma}U_s^{yD}\right)ds+U(yD_T)\Big|\mathscr{F}_t\right],~~~\forall t\in[0,T], 
\end{align}
where 
\begin{align*}
&g(d,u)=\delta^\psi\frac{d^{1-\psi}}{\psi-1}((1-\gamma)u)^{1-\frac{\gamma\psi}{\theta}}-\delta\theta u,\quad d>0, u<0,\\
&U(d)=\frac{\gamma}{1-\gamma}d^{\frac{\gamma-1}{\gamma}},\quad d>0.
\end{align*}
\cite{matoussi2018convex} formulate $U^{yD}$ by the variational representation. Then they define the \emph{admissible} class of $D$ similar to $\widehat{\mathcal{A}}$:
$$\widehat{\mathcal{D}}=\left\{D\in\mathcal{D}:  U^{yD} {\rm\ exists\ }, U^{yD}<0, U^{yD} {\rm\ is\ of\ class\ D}\right\}.$$
There are sufficient conditions ensuring the existence of Epstein-Zin stochastic differential duals, which implies $\widehat{\mathcal{D}}\neq\emptyset$, see Proposition 2.5 of \cite{matoussi2018convex}.

Theorem 2.7 in \cite{matoussi2018convex} tells us that the Epstein-Zin stochastic differential dual is exactly the dual version for Epstein-Zin utility when $\gamma,\psi>1$:
\begin{align}\label{duality}
\sup_{(c,\pi)\in\widehat{\mathcal{A}}}V_0^{c,\pi}\leq\inf_{y>0}\left(\inf_{D\in\widehat{\mathcal{D}}}U_0^{yD}+wy\right).
\end{align}
We call the right side of (\ref{duality}) the dual problem. In fact, (\ref{duality}) holds for a broader range of $\gamma$ and $\psi$: $\gamma\psi\geq1,\psi>1$ or $\gamma\psi\leq1,\psi<1$. Since both \cite{xing2017consumption} and \cite{matoussi2018convex} are referenced in our analysis, we have adopted the common parameter range addressed in both works, namely, $\gamma,\psi>1$.  

\cite{matoussi2018convex} show that the inequality \eqref{duality} is
actually an equality, i.e., there is no duality gap,  under the non-Markovian market with bounded market parameters (Theorem 3.6) or the Markovian market (Theorem 3.12). In the case of bounded market coefficients, Theorem 5.1 of \cite{kraft2017optimal} also implies the absence of a duality gap.

In this subsection,  we demonstrate that there is no duality gap in the non-Markovian market with a class of unbounded market parameters. Since the primal problem has been solved in subsection \ref{mop}, we focus on obtaining the results on the dual side.

\begin{theorem}\label{dual}
Suppose that Assumptions \ref{1} and \ref{2} hold and $\gamma,\psi>1$. Then there exist $y^*>0$ and $D^*\in\widehat{\mathcal{D}}$ such that
\begin{align*}
\max_{(c,\pi)\in\widehat{\mathcal{A}}}V_0^{c,\pi}=\max_{(c,\pi)\in\mathcal{A}}V_0^{c,\pi}=V_0^{c^*,\pi^*}=U_0^{y^*D^*}+\omega y^*=\min_{y>0}\left(\min_{D\in\widehat{\mathcal{D}}}\left(U_0^{yD}+wy\right)\right),
\end{align*}where $(c^*,\pi^*)$ is defined by \eqref{os}.
\end{theorem}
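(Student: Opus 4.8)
The plan is to derive the whole chain of equalities from weak duality (\ref{duality}) together with the primal solution of Theorem \ref{main}, by producing an explicit dual-feasible pair $(y^*,D^*)$ that attains the dual bound. Since $\mathcal{A}\subseteq\widehat{\mathcal{A}}$ and $(c^*,\pi^*)\in\mathcal{A}$ by Theorem \ref{main}, we already have $\sup_{(c,\pi)\in\widehat{\mathcal{A}}}V_0^{c,\pi}\geq\sup_{(c,\pi)\in\mathcal{A}}V_0^{c,\pi}\geq V_0^{c^*,\pi^*}=\frac{\omega^{1-\gamma}}{1-\gamma}e^{Y_0}$. Hence it suffices to exhibit $y^*>0$ and $D^*\in\widehat{\mathcal{D}}$ with $U_0^{y^*D^*}+\omega y^*=\frac{\omega^{1-\gamma}}{1-\gamma}e^{Y_0}$: inequality (\ref{duality}) then squeezes every quantity in the statement to the common value $\frac{\omega^{1-\gamma}}{1-\gamma}e^{Y_0}$ and forces attainment of both the primal and the dual optima.

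First I would read off the candidate dual variables from the first-order conditions of the primal optimizer. Writing $V^*=\frac{(\mathcal{W}^*)^{1-\gamma}}{1-\gamma}e^{Y}$ and letting $f_c,f_v$ denote the partial derivatives of the aggregator in (\ref{EZa}), a direct computation using $c^*=\delta^\psi e^{-\frac{\psi}{\theta}Y}\mathcal{W}^*$ gives $f_c(c^*_t,V^*_t)=e^{Y_t}(\mathcal{W}^*_t)^{-\gamma}$. I therefore set $y^*=f_c(c^*_0,V^*_0)=\omega^{-\gamma}e^{Y_0}$ and let $D^*$ be the state price density
\begin{align*}
D^*_t=\exp\left(-\int_0^tr_s\,ds\right)\mathcal{E}\left(\int_0^\cdot\left(Z_sP^\perp_s-\mu_s'\Sigma_s^{-1}\sigma_s\right)dW_s\right)_t,\qquad P^\perp_s:=I_n-\sigma_s'\Sigma_s^{-1}\sigma_s,
\end{align*}
whose orthogonal component $ZP^\perp$ lies in $\ker\sigma$ because $\sigma P^\perp=0$, so that $D^*\in\mathcal{D}$. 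By construction $y^*D^*_t$ coincides with the Duffie--Skiadas utility gradient $\exp\!\big(\int_0^tf_v(c^*_s,V^*_s)\,ds\big)f_c(c^*_t,V^*_t)$ of the optimal consumption stream; this pointwise identification between the marginal utility and the dual deflator is the structural reason for the absence of a gap.

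Next I would verify $D^*\in\widehat{\mathcal{D}}$, which splits into two integrability statements. The first is that the local martingale $\mathcal{E}(\int(ZP^\perp-\mu'\Sigma^{-1}\sigma)dW)$ is a true martingale, so that $D^*$ is a genuine state price density and, crucially, the budget constraint binds, $\mathbf{E}\big[D^*_T\mathcal{W}^*_T+\int_0^TD^*_sc^*_s\,ds\big]=\omega$. Its integrand differs from that of $Q$ in (\ref{mse}) exactly by the wealth volatility $(\pi^*)'\sigma$, a term controlled by the very same estimates, so the de la Vall\'ee-Poussin and relative-entropy argument that proved the martingale property of $Q$ in Theorem \ref{main} applies, once the factor $e^{-\int_0^\cdot r_s\,ds}$ is controlled, which is where Assumption \ref{2} is used. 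The second statement is that $U^{y^*D^*}$ exists, is strictly negative, and is of class (D); here I would identify $U^{y^*D^*}$ explicitly through $Y$, $Z$ and $\mathcal{W}^*$ and dominate it by means of the class-(D) processes (\ref{exinD1})--(\ref{exinD2}) and the exponential moments (\ref{exin}) of Proposition \ref{Y}. These estimates are precisely what replaces the boundedness of the market coefficients assumed in \cite{matoussi2018convex}.

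Finally, the no-gap equality follows from the pointwise Fenchel conjugacy between the aggregators $f$ and $g$ evaluated along $(c^*,V^*,y^*D^*)$ together with the binding budget constraint: a short computation yields $U_0^{y^*D^*}=\frac{\gamma}{1-\gamma}\omega^{1-\gamma}e^{Y_0}$, and adding $\omega y^*=\omega^{1-\gamma}e^{Y_0}$ recovers $\frac{\omega^{1-\gamma}}{1-\gamma}e^{Y_0}=V_0^{c^*,\pi^*}$. I expect the main obstacle to be the feasibility step $D^*\in\widehat{\mathcal{D}}$: under unbounded $r$ and $\mu'\Sigma^{-1}\mu$, neither the true-martingale (binding-budget) property nor the class-(D) property of $U^{y^*D^*}$ is automatic as it was in the bounded-coefficient regime of \cite{matoussi2018convex}, and both have to be wrung out of the sharp exponential-integrability estimates of Proposition \ref{Y} and the relative-entropy technique underlying Theorem \ref{main}.
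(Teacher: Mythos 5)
Your proposal follows essentially the same route as the paper: the same $y^*=\omega^{-\gamma}e^{Y_0}$, the same dual optimizer $D^*$ (your $Z P^\perp-\mu'\Sigma^{-1}\sigma$ is exactly the paper's $\xi^*=Z-\mu'\Sigma^{-1}\sigma-Z\sigma'\Sigma^{-1}\sigma$), the same squeeze against the weak duality inequality (\ref{duality}), and the same explicit computation $U_0^{y^*D^*}+\omega y^*=\frac{\omega^{1-\gamma}}{1-\gamma}e^{Y_0}$. The one simplification you miss is that the paper verifies $D^*\in\widehat{\mathcal{D}}$ not by re-running the relative-entropy argument for $\mathscr{E}(\int\xi^*dW)$ or by invoking a binding budget constraint, but by the identity (\ref{UI2}) showing that $(D^*)^{\frac{\gamma-1}{\gamma}}e^{Y/\gamma}$ equals a bounded factor (using the upper bound on $Y$ from Assumption \ref{2}) times the very martingale $Q$ already shown to be of class (D) in Theorem \ref{main}, which makes the class-(D) verification immediate.
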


\begin{remark}
The first equality implies that the admissible class can be generalized from $\mathcal{A}$ to $\widehat{\mathcal{A}}$ through the duality theorem \ref{dual}, wherein the integrability requirements are further relaxed.

$R^{yD}$, defined as (\ref{upo2}), is established under the martingale optimal principle for the dual problem similar to $G^{c,\pi}$. Naturally, the dual problem shares the same stochastic exponential $Q$ with the primal problem. Hence the proof can be followed by a similar method to that used in the primal problem. 
\end{remark}

\section{Examples}\label{Example}
In this section, we discuss three well-known models (Heston, linear diffusion, and CIR)  in a non-Markovian environment, and show that the parameters of these models satisfy the exponential integrability condition, so  Assumptions \ref{1} and \ref{2} hold. We assume $\widetilde{W}$ is a standard Brownian motion, independent of $W$, and $\rho\in[-1,1]$ is the correlation coefficient. 

\subsection{Heston model}
Consider a stochastic volatility
model in which the stock price $S$ and a 1-dimensional state variable $X$ follow
\[
\begin{dcases}
dX_t=b(l-X_t)dt+a\sqrt{X_t}dW_t,\\
dS_t=S_t\left[\left(r_t+\mu(X_t)\right)dt+\sqrt{X_t}\sigma\rho dW_t+\sqrt{X_t}\sigma\sqrt{1-\rho^2}d\widetilde{W}_t\right],
\end{dcases}
\]
where $r$ is a bounded adapted process,  $\mu(x)=\lambda x$, and  $\lambda, a, b,l,\sigma$ are positive constants. The inverse Heston model has been studied by \cite{chacko2005dynamic} for recursive utilities with $\psi=1$. The following result provides parameter conditions such that Assumptions \ref{1} and \ref{2} hold.

\begin{proposition}\label{heston}
Suppose that $\gamma,\psi>1$ and $q>1$ is a constant. Moreover, 
\begin{enumerate}[{(i)}]
  \item Either $4bl\leq a^2, qT\lambda^2a^2(e^{bT}-1)<2b\sigma^2$ 
  \item or $2bl>a^2, 2q\lambda^2a^2<b^2\sigma^2$. 
\end{enumerate}
Then Assumption \ref{3} holds.
\end{proposition}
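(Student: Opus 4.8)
The plan is to verify the two requirements of Assumption~\ref{3} separately and to observe that all the real work sits in the second (market-price-of-risk) term. Since $r$ is a bounded adapted process, $r^-$ is bounded and $\int_0^T r_s^+\,ds\le\|r\|_\infty T$, so $\exp\!\big(q(\gamma-1)\int_0^T r_s^+\,ds\big)$ is bounded and its expectation is trivially finite. For the remaining term I would first compute the market price of risk explicitly: here $d=1$, the driving noise is $(W,\widetilde W)$, and $\sigma_t=\big(\sqrt{X_t}\,\sigma\rho,\ \sqrt{X_t}\,\sigma\sqrt{1-\rho^2}\big)$, so that
\begin{align*}
\Sigma_t=\sigma_t\sigma_t'=\sigma^2 X_t,\qquad \mu_t'\Sigma_t^{-1}\mu_t=\frac{\lambda^2 X_t}{\sigma^2}.
\end{align*}
The whole proposition therefore reduces to the single exponential-moment estimate $\mathbf{E}\big[\exp(\kappa\int_0^T X_s\,ds)\big]<+\infty$ for the integrated square-root (CIR) process, where $\kappa:=q\lambda^2/\sigma^2$.

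The main tool I would use is the affine structure of the CIR process, realised as an explicit exponential supermartingale. For a deterministic function $\beta(\cdot)\le 0$, set $M_t:=\exp\!\big(\kappa\int_0^t X_s\,ds-\beta(t)X_t\big)$; Itô's formula produces a drift whose coefficient of $X_t$ is $\kappa-\beta'+b\beta+\tfrac12 a^2\beta^2$ and whose constant part is $-bl\,\beta$. Choosing $\beta$ to solve the scalar Riccati equation $\beta'=\tfrac12 a^2\beta^2+b\beta+\kappa$ with terminal value $\beta(T)=0$ annihilates the $X_t$-coefficient, so that $\exp\!\big(bl\int_0^t\beta\big)M_t$ is a nonnegative local martingale, hence a supermartingale by Fatou. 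Since $\beta(T)=0$ gives $M_T=\exp(\kappa\int_0^T X_s\,ds)$, this yields
\begin{align*}
\mathbf{E}\left[\exp\left(\kappa\int_0^T X_s\,ds\right)\right]\le \exp\left(-\beta(0)X_0-bl\int_0^T\beta(s)\,ds\right),
\end{align*}
so everything comes down to showing that the Riccati solution $\beta$ stays finite (does not explode) on $[0,T]$ while remaining nonpositive; equivalently, that the moment generating function of $\int_0^T X_s\,ds$ has not yet blown up by the horizon $T$.

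The non-explosion analysis splits exactly along the dichotomy in the statement, and this is where the main obstacle lies. The roots of $\tfrac12 a^2 y^2+by+\kappa$ are real and nonpositive precisely when $b^2\ge 2a^2\kappa$, i.e. $2q\lambda^2 a^2\le b^2\sigma^2$; in this sub-threshold regime (case (ii)) the solution starting from $\beta(T)=0$ is trapped between the larger root and $0$, so it exists on all of $[0,T]$ and the bound above is finite \emph{for every} horizon $T$, with the Feller-type condition $2bl>a^2$ entering only to keep the deterministic integral $bl\int_0^T\beta$ under explicit control. The hard part will be the super-threshold regime covered by case (i): when $b^2<2a^2\kappa$ the quadratic is strictly positive, $\beta$ is strictly monotone in backward time and blows up at a finite horizon $T^*$, so one must make $T<T^*$ quantitative. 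I would do this by linearising the Riccati through $\beta=-\tfrac{2}{a^2}\,w'/w$, which reduces it to the constant-coefficient equation $w''-bw'+\tfrac12 a^2\kappa\,w=0$; its characteristic roots are complex with growth factor $e^{bt/2}$, and the first zero of $w$ governs $T^*$. Estimating that first zero from below by a Gronwall/comparison argument—using $4bl\le a^2$ to bound the contribution of $bl\int_0^T\beta$ and producing the factor $e^{bT}-1$—delivers the explicit sufficient condition $qT\lambda^2 a^2(e^{bT}-1)<2b\sigma^2$. Throughout, the delicate point is to keep all constants finite in the non-Feller regime and to justify that the relevant nonnegative local martingale is a genuine supermartingale, which is precisely what these comparison bounds are designed to secure; once the finite exponential moment is established, the second integrability requirement of Assumption~\ref{3} holds, and together with the boundedness of $r^-$ this gives the claim.
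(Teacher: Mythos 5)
Your reduction of the proposition to the single exponential-moment estimate $\mathbf{E}\left[\exp\left(\kappa\int_0^T X_s\,ds\right)\right]<+\infty$ with $\kappa=q\lambda^2/\sigma^2$ is exactly right, and your treatment of the bounded-$r$ term matches the paper. Where you diverge is that the paper does not prove this estimate at all: it simply invokes Theorem 4.1 of \cite{yong2004some} for case (i) and Lemma C.1 of \cite{xing2017consumption} for case (ii), whose hypotheses are precisely the two parameter sets in the statement. Your Riccati/supermartingale construction is the standard machinery underlying both cited results, and for case (ii) your argument is essentially complete: when $2a^2\kappa<b^2$ the quadratic $\tfrac12 a^2y^2+by+\kappa$ has two negative real roots, the backward solution from $\beta(T)=0$ is trapped between $0$ and the larger root, hence global, and the supermartingale bound closes the argument. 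A small quibble there: once $\beta$ is bounded on $[0,T]$ the term $-bl\int_0^T\beta(s)\,ds$ is automatically finite, so the Feller condition $2bl>a^2$ is not doing the work you ascribe to it; in \cite{xing2017consumption} it is simply a standing hypothesis of their Lemma C.1.

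The genuine gap is in case (i). You assert that a Gronwall/comparison bound on the first zero of $w$ ``delivers'' the condition $qT\lambda^2a^2(e^{bT}-1)<2b\sigma^2$, but you never derive it, and it is not evident that the comparison you outline produces this constant: a self-consistent bound of the type you describe (freeze $g\le G$ in the quadratic term and solve the resulting linear inequality) yields thresholds of the form $\kappa a^2(e^{cT}-1)\lesssim\mathrm{const}$ whose constants do not obviously match the stated one, and the explicit factor $T$ in the stated condition has no visible source in your sketch. Moreover, your claim that $4bl\le a^2$ is used ``to bound the contribution of $bl\int_0^T\beta$'' is backwards: since $\beta\le0$ that contribution is nonnegative and an \emph{upper} bound on $bl$ does not control it; in Theorem 4.1 of \cite{yong2004some} the inequality $4bl\le a^2$ delineates the boundary-behaviour regime of the square-root process and is a hypothesis of that theorem, not an ingredient of the non-explosion estimate. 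To make your self-contained route rigorous you would either have to carry the explosion-time estimate through to the exact constant in (i), or do what the paper does and cite the two results from which conditions (i) and (ii) are transcribed.
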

\begin{proof}
Since $r$ is a bounded adapted process, it is clear that
\begin{align*}
\mathbf{E}\left[\exp\left(q(\gamma-1)\int_0^Tr_s^+ ds\right)\right]<+\infty.
\end{align*}
It then follows from Theorem 4.1 of \cite{yong2004some} that Proposition \ref{heston}(i) makes
\begin{align*}
\mathbf{E}\left[\exp\left(q\int_0^T\mu'_s\Sigma_s^{-1}\mu_sds\right)\right]=\mathbf{E}\left[\exp\left(q\frac{\lambda^2}{\sigma^2}\int_0^TX_sds\right)\right]<+\infty,
\end{align*}
which can also be guaranteed by Proposition \ref{heston}(ii) due to Lemma C.1 of \cite{xing2017consumption}.
\end{proof}
 
In Proposition 3.2 of \cite{xing2017consumption}, all parameter assumptions are imposed to satisfy the necessary conditions for verifying the main results. Among these assumptions, one of them is made to fulfill the requirements of an auxiliary lemma (Lemma C.1) introduced in their Appendix. Besides this lemma, we find Theorem 4.1 of \cite{yong2004some} to verify the exponential integrability of market parameters. Therefore, we have two restrictions on the parameters and either one suffices. 

Note that $r$ makes the model non-Markovian. Moreover, if the time terminal $T$ is sufficiently small, the second condition of item (i) holds automatically and is not needed.

We thank the referees for their suggestions, which motivate us to add Remark 4.1 validating two popular Markovian models, and Remark 4.2 analyzing two non-Markovian models discussed in the financial literature.

\begin{remark} Assumption \ref{3} also holds for some other popular volatility models, including \cite{stein1991stock} and \cite{chacko2005dynamic}. 

For the model of \cite{stein1991stock},
\[
\begin{dcases}
d\sigma_t=-\delta(\sigma_t-\theta)dt+kdW_t,\\
dS_t=S_t(\mu dt+\sigma_td\widetilde{W}_t),
\end{dcases}
\]
where $r,\delta,\theta,k,\mu$ are fixed constants.  What we need to verify is that $$\mathbf{E}\left[\exp\left(q\int_{0}^{T}\frac{(\mu-r)^2}{\sigma_t^2}dt\right)\right]<\infty.$$
This can be established by using Theorem 3.1(ii) of \cite{yong2004some}, which requires no additional conditions since the power $\gamma$ of the function $\varphi$ in that theorem can be set to 1. In other words, Assumption \ref{3} is satisfied automatically in the model of \cite{stein1991stock}.

For the model of \cite{chacko2005dynamic},
\[
\begin{dcases}
dy_t=\kappa(\theta-y_t)dt+\sigma\sqrt{y_t}dW_t,\\
dS_t=S_t\left(\mu dt+\sqrt{\frac{1}{y_t}}\rho dW_t+\sqrt{\frac{1}{y_t}}\sqrt{1-\rho^2}d\widetilde{W}_t\right),
\end{dcases}
\]
where $r,\kappa,\theta,\sigma,\mu$ are fixed constants. Our goal is to verify that
$$\mathbf{E}\left[\exp\left(q\int_{0}^{T}(\mu-r)^2y_t^2dt\right)\right]<\infty.$$
This follows from Lemma C.1 of \cite{xing2017consumption}, which requires the condition 
$q(\mu-r)^2<\frac{\kappa^2}{2\sigma^2}$
to make Assumption \ref{3} hold.
\end{remark}

\begin{remark}
There are some examples of non-Markovian models relevant for the finance literature.
 The first example is the \textit{rough fractional stochastic volatility   model} proposed by \cite{gatheral2018volatility}:
\begin{align*}
&\frac{dS_t}{S_t}=\mu_tdt+\sigma_tdW_t,\\
&\sigma_t=e^{X_t},
\end{align*} 
where $X$ is a fractional OU process satisfying for $\alpha,\nu>0$,
$$dX_t=-\alpha X_tdt+\nu dW_t^H,$$
and  $W^H$ is a fractional Brownian motion with Hurst parameter $H\in(0,1)$ and has  the Mandelbrot van Ness representation
\begin{align*}
W_t^H=&\frac{1}{\Gamma(H+1/2)}\int_{-\infty}^{0}\left((t-s)^{H-\frac{1}{2}}-(-s)^{H-\frac{1}{2}}\right)dW_s+\frac{1}{\Gamma(H+1/2)}\int_{0}^{t}(t-s)^{H-\frac{1}{2}}dW_s.
\end{align*}
This model is stationary due to the log-volatility being a fractional OU process. The introduction of $W^H$ makes the model non-Markovian. Moreover, the solution $X_t$ is given by
$$X_t=\nu\int_{-\infty}^{t}e^{-\alpha(t-s)}dW_t^H.$$
Assumption \ref{3}  requires that $r^-$ is bounded and for some $q>1$,
\begin{align*}
\mathbf{E}\left[\exp\left(q(\gamma-1)\int_0^Tr_s^+ ds\right)+\exp\left(q\int_0^T\frac{\mu_s^2}{\sigma_s^2}ds\right)\right]<+\infty.
\end{align*}
However, since $\sigma$ itself is an exponential, we need to verify the exponential integrability of an exponential $\exp\left(q\int_0^T\mu_s^2e^{-2X_s}ds\right)$.

The second example is the \textit{rough Heston model} proposed by \cite{el2019characteristic}:
\begin{align*}
&\frac{dS_t}{S_t}=\sqrt{V_t}dW_t,\\
&V_t=V_0+\frac{1}{\Gamma(\alpha)}\int_{0}^{t}(t-s)^{\alpha-1}\gamma(\theta-V_s)ds
+\frac{1}{\Gamma(\alpha)}\int_{0}^{t}(t-s)^{\alpha-1}\gamma\nu\sqrt{V_s}dB_s,
\end{align*} 
where $\gamma,\theta,\nu>0$, $\alpha\in(1/2,1)$ (when $\alpha=1$, it is the classical Heston model), and $W$ and $B$ are two correlated Brownian motions. Inspired by the formulation of $W^H$,  this model introduces the kernel $(t-s)^{\alpha-1}$ in a Heston-like stochastic volatility process as above. To apply our results, we need to verify the exponential integrability of $V$.  We leave the above two non-Markovian models for future research. 
\end{remark}

\subsection{Linear diffusion model}
The following model has a 1-dimensional state variable following an Ornstein-Uhlenbeck process, which constitutes a linear excess return of risky assets
\[
\begin{dcases}
dX_t=-bX_tdt+adW_t,\\
dS_t=S_t\left[\left(r_t+\mu(X_t)\right)dt+\sigma\rho dW_t+\sigma\sqrt{1-\rho^2}d\widetilde{W}_t\right],
\end{dcases}
\]
where $r$ is a bounded adapted process,  $\mu(x)=\sigma(\lambda_0+\lambda_1x)$, and $a, b,\sigma,\lambda_0,\lambda_1$ are positive constants. If the volatility term is a stochastic process while the drift is a constant, the model has been studied by \cite{stein1991stock}. Then the following proposition identifies market parameters such that  Assumption \ref{1} and \ref{2} hold.
\begin{proposition}\label{LD}
Suppose that $\gamma,\psi>1$ and $q>1$ is a constant. Moreover, 
\begin{enumerate}[{(i)}]
  \item Either for some $c>2q\lambda_1^2$,
      \begin{align}\label{LDcond}
         \frac{cTa^2}{b}(e^{2bT}-1)<1
      \end{align} 
  \item or $2q\lambda_1^2<\frac{b^2}{2a^2}$.
\end{enumerate}
Then Assumption \ref{3} holds.
\end{proposition}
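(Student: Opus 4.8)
The plan is to reduce Assumption~\ref{3} to a single exponential moment of a quadratic functional of the Ornstein-Uhlenbeck process $X$, exactly as in the Heston case. Since $r$ is a bounded adapted process, $r^-$ is bounded and $\mathbf{E}\left[\exp\left(q(\gamma-1)\int_0^T r_s^+ ds\right)\right]<+\infty$ holds trivially, so the only substantive claim is the finiteness of the market-price-of-risk term. In this model the stock volatility row vector is $\sigma_t=(\sigma\rho,\sigma\sqrt{1-\rho^2})$, whence $\Sigma_t=\sigma_t\sigma_t'=\sigma^2$ and $\mu_t=\sigma(\lambda_0+\lambda_1 X_t)$, so that
\begin{align*}
\mu_t'\Sigma_t^{-1}\mu_t=(\lambda_0+\lambda_1 X_t)^2.
\end{align*}
The task thus becomes to verify $\mathbf{E}\left[\exp\left(q\int_0^T(\lambda_0+\lambda_1 X_s)^2 ds\right)\right]<+\infty$.

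Next I would strip off the benign linear and constant contributions. Applying the elementary bound $(\lambda_0+\lambda_1 X_s)^2\leq 2\lambda_0^2+2\lambda_1^2 X_s^2$ gives
\begin{align*}
\exp\left(q\int_0^T(\lambda_0+\lambda_1 X_s)^2 ds\right)\leq e^{2q\lambda_0^2 T}\exp\left(2q\lambda_1^2\int_0^T X_s^2 ds\right),
\end{align*}
so it suffices to control $\mathbf{E}\left[\exp\left(2q\lambda_1^2\int_0^T X_s^2 ds\right)\right]$. This is the step that produces the effective multiplier $2q\lambda_1^2$ appearing in both hypotheses of Proposition~\ref{LD}, and it explains why only the quadratic term matters: since $X$ is Gaussian, any linear functional $\int_0^T X_s\, ds$ already has all exponential moments.

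I would then invoke the two estimates for the Gaussian quadratic functional, in direct analogy with the proof of Proposition~\ref{heston}. Under condition~(ii) the time-homogeneous threshold $2q\lambda_1^2<\frac{b^2}{2a^2}$ is exactly the hypothesis of Lemma~C.1 of \cite{xing2017consumption} applied to the mean-reverting process $X$, yielding $\mathbf{E}\left[\exp\left(2q\lambda_1^2\int_0^T X_s^2 ds\right)\right]<+\infty$. Under condition~(i), one instead chooses $c>2q\lambda_1^2$ with $\frac{cTa^2}{b}(e^{2bT}-1)<1$ and applies the corresponding exponential-moment estimate of \cite{yong2004some} to reach the same conclusion; either hypothesis alone suffices.

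The main obstacle is the quadratic Gaussian functional $\int_0^T X_s^2\, ds$: because $X$ has Gaussian marginals, its exponential moment $\mathbf{E}\left[\exp\left(\theta\int_0^T X_s^2 ds\right)\right]$ is finite only for $\theta$ below a critical level dictated by the mean-reversion rate $b$ and the diffusion coefficient $a$, and diverges beyond it. Pinning down this critical multiplier is the crux; the two cited results provide the requisite threshold in the time-independent ($\frac{b^2}{2a^2}$) and time-dependent regimes respectively, after which the verification is routine.
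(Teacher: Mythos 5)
Your proposal is correct and follows essentially the same route as the paper: reduce to the exponential moment of $2q\lambda_1^2\int_0^T X_s^2\,ds$ via $\mu_t'\Sigma_t^{-1}\mu_t=(\lambda_0+\lambda_1X_t)^2\le 2\lambda_0^2+2\lambda_1^2X_t^2$, then invoke Theorem~3.1(ii) of \cite{yong2004some} under condition~(i) (the paper's proof stops there) and Lemma~C.1 of \cite{xing2017consumption} under condition~(ii). The only point worth tightening is that Lemma~C.1 concerns $\int_0^T V_s\,ds$ for a square-root diffusion $V$, so it should be applied to $V=X^2$ (which satisfies $dV_t=2b\bigl(\tfrac{a^2}{2b}-V_t\bigr)dt+2a\sqrt{V_t}\,d\widetilde W_t$, giving precisely the threshold $\tfrac{(2b)^2}{2(2a)^2}=\tfrac{b^2}{2a^2}$) rather than to $X$ itself.
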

\begin{proof}
Since $r$ is bounded, we only have to verify that
$$\mathbf{E}\left[\exp\left(q\int_0^T\mu'_s\Sigma_s^{-1}\mu_sds\right)\right]<+\infty.$$
A simple calculation indicates that
\begin{align*}
&\mathbf{E}\left[\exp\left(q\int_0^T\mu'_s\Sigma_s^{-1}\mu_sds\right)\right]\leq\exp\left(2q\lambda_0^2T \right)\mathbf{E}\left[\exp\left(2q\lambda_1^2\int_{0}^{T}X^2_sds\right)\right].
\end{align*}
According to Theorem 3.1(ii) of \cite{yong2004some} Assumption \ref{1} is verified by (\ref{LDcond}).
\end{proof}

Proposition 3.4 of \cite{xing2017consumption} also considered the linear diffusion model and added some assumptions similar to Proposition 3.2 of \cite{xing2017consumption}. We can set $a,b$ as two adapted processes satisfying (\ref{LDcond}) and Proposition \ref{LD} still holds, then our linear diffusion model permits a broader range of parameter settings.

\subsection{CIR model}
In this model, the short interest rate $r$ satisfies the CIR model. Specifically, the model dynamics are as follows
\[
\begin{dcases}
dr_t=(b-lr_t)dt+ar_t^{\frac{1}{2}} dW_t,\\
dS_t=S_t\left[\left(r_t+\mu_t\right)dt+\sigma_t\rho dW_t+\sigma_t\sqrt{1-\rho^2}d\widetilde{W}_t\right],
\end{dcases}
\]
where $\mu$ is a bounded adapted process and $\varepsilon\leq\sigma_t\leq\frac{1}{\varepsilon}$ for all $t\geq0$ a.s. for constant $0<\varepsilon<1$, and  $a, b,l$ are positive constants. 

The main difference of this model from the previous two models lies in the fact that the market price of risk $\left(\mu'\Sigma^{-1}\mu\right)^{\frac{1}{2}}$ is bounded, while the interest rate is stochastic without boundedness restrictions. Then the following result specifies Assumptions \ref{1} and \ref{2} to explicit model parameter conditions. 


\begin{proposition}\label{CIR}
Suppose that $\gamma,\psi>1$ and $q>1$ is a constant. Moreover, 
\begin{enumerate}[{(i)}]
  \item $2b>a^2$;
  \item $q(\gamma-1)<\frac{l^2}{2a^2}$.
\end{enumerate}
Then Assumption \ref{3} holds.
\end{proposition}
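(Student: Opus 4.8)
The plan is to verify the three ingredients of Assumption~\ref{3} one at a time: boundedness of $r^-$, finiteness of the exponential moment of $\int_0^T\mu_s'\Sigma_s^{-1}\mu_s\,ds$, and finiteness of the exponential moment of $\int_0^T r_s^+\,ds$. The first two are essentially immediate, so the whole difficulty is concentrated in the last one, which amounts to an exponential moment of the time-integral of a CIR process.

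First I would dispose of the two easy pieces. Since $r$ solves $dr_t=(b-lr_t)\,dt+ar_t^{1/2}\,dW_t$ with $a,b,l>0$, the process stays nonnegative; condition~(i), $2b>a^2$, is the Feller condition, so $r_t>0$ almost surely and in particular $r^-\equiv0$ is bounded. For the market price of risk, the two-dimensional volatility row vector $\sigma_t(\rho,\sqrt{1-\rho^2})$ gives $\Sigma_t=\sigma_t^2$, whence $\mu_t'\Sigma_t^{-1}\mu_t=\mu_t^2/\sigma_t^2$. Because $\mu$ is bounded and $\sigma_t\ge\varepsilon$, this ratio is uniformly bounded by $\|\mu\|_\infty^2/\varepsilon^2$, so $\exp\!\big(q\int_0^T\mu_s'\Sigma_s^{-1}\mu_s\,ds\big)$ is bounded and trivially integrable for every $q$.

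The heart of the matter is $\mathbf{E}\big[\exp(q(\gamma-1)\int_0^T r_s\,ds)\big]<+\infty$ (using $r^+=r$). I would exploit the affine structure of the CIR process: writing $\lambda:=q(\gamma-1)$ and seeking $\mathbf{E}[\exp(\lambda\int_t^T r_s\,ds)\mid r_t=r]=\exp(A(t)+B(t)r)$, one finds that $B$ solves the Riccati equation $B'-lB+\tfrac12 a^2B^2+\lambda=0$ with $B(T)=0$, while $A'=-bB$. In reversed time $\tau=T-t$ the scalar dynamics is governed by the upward-opening quadratic $\tfrac12 a^2\beta^2-l\beta+\lambda$, whose roots are real and distinct precisely when the discriminant $l^2-2a^2\lambda$ is positive; in that regime $\beta$ starts at $0$, increases, and is attracted to the smaller root $\beta_-=(l-\sqrt{l^2-2a^2\lambda})/a^2$, so it stays bounded on all of $[0,T]$ and the transform does not explode. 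Condition~(ii), $q(\gamma-1)<\frac{l^2}{2a^2}$, is exactly $\lambda<\frac{l^2}{2a^2}$, i.e. a positive discriminant, which yields the finite exponential moment. Alternatively, this conclusion can be quoted directly from Lemma~C.1 of \cite{xing2017consumption}, which handles exponential moments of integrated square-root processes under the identical threshold.

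The main obstacle is the no-blow-up analysis of the Riccati equation in the third step: one must check that for $\lambda<\frac{l^2}{2a^2}$ the solution starting from $\beta(0)=0$ genuinely stays bounded for all reversed times, rather than escaping to $+\infty$ in finite time, which is exactly what happens once $\lambda$ exceeds the critical value $\frac{l^2}{2a^2}$ (the quadratic then stays strictly positive and forces super-linear growth). This is a standard but essential phase-line argument for a scalar Riccati equation, and it is where the precise constant $\frac{l^2}{2a^2}$ in condition~(ii) originates. Condition~(i) plays only the supporting role of guaranteeing strict positivity of $r$, and hence the vanishing of $r^-$ and the regularity needed to invoke the affine transform formula cleanly.
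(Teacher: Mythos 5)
Your proposal is correct and follows essentially the same route as the paper: condition (i) is the Feller condition giving $r>0$ (so $r^-\equiv 0$), the boundedness of $\mu$ and the bounds $\varepsilon\le\sigma\le 1/\varepsilon$ make $\mu'\Sigma^{-1}\mu$ bounded, and the exponential moment of $\int_0^T r_s\,ds$ under $q(\gamma-1)<\frac{l^2}{2a^2}$ is exactly what the paper obtains by citing Lemma~C.1 of \cite{xing2017consumption}. The only difference is that you additionally sketch the affine/Riccati discriminant argument underlying that lemma, which the paper simply quotes.
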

\begin{proof}
Condition (i) implies that $r$ is positive. Since $\mu'\Sigma^{-1}\mu$ is bounded, the only assumption we need to verify is 
$$\mathbf{E}\left[\exp\left(q(\gamma-1)\int_0^Tr_s^+ ds\right)\right]<+\infty,$$ 
which is ensured by Lemma C.1 of \cite{xing2017consumption} with Proposition \ref{CIR}(i)(ii).   
\end{proof}

CIR model is a classic non-Markovian stochastic interest model. Its well-posedness is studied by a large body of literature; see \cite{cox1985theory}. As we assume that the excess rate of return and the volatility are both bounded processes, the unbounded parameter is the interest rate. The role of conditions (i) and (ii) is to ensure $r$ is positive and exponentially integrable.

\section{The proofs}
In this section, we provide the proofs of Proposition \ref{Y}, Theorem \ref{main}, and Theorem \ref{dual}.

\subsection{The proof of Proposition \ref{Y}}
We first prove the existence result of the solution to BSDE (\ref{mar}). For any integers $m,n,k\geq 1$,  for each $(t,y,z)\in[0,T]\times\mathbb{R}\times\mathbb{R}^{1\times n}$, we truncate the generator as follows:
\begin{align}\label{generatortruncated}
H^m(t,y,z):=&z\left(\frac{1}{2}I_n+\frac{1-\gamma}{2\gamma}\sigma_t'\Sigma_t^{-1}\sigma_t\right)z'+\frac{1-\gamma}{\gamma}\mu_t'\Sigma_t^{-1}\sigma_tz'+\frac{\theta}{\psi}\delta^\psi \left(e^{-\frac{\psi}{\theta}y}\wedge m\right)\notag\\
&+\frac{1-\gamma}{2\gamma}\mu_t'\Sigma_t^{-1}\mu_t+(1-\gamma) r_{t}-\delta \theta,\\
H^{m,n,k}(t,y,z):=&\left(H^m(t,y,z)\right)^{n,k}:=\mathbb{I}_{\{t\leq\rho_n\}}(H^m)^+(t,y,z)-\mathbb{I}_{\{t\leq\rho_k\}}(H^m)^-(t,y,z),\notag
\end{align}
where the stopping times
\begin{align*}
\rho_i:=\inf\left\{t\geq0: \int_0^t|r_s|ds+\int_0^t\mu_s'\Sigma_s^{-1}\mu_sds\geq i\right\}\wedge T,\quad i=n,k.
\end{align*} 
Note that $\lim_{m,n,k\to\infty}H^{m,n,k}=H$ and $H^{m,n,k}$ is Lipschitz in $y$ and quadratic in $z$. In fact, since the eigenvalues of $\sigma'\Sigma^{-1}\sigma$ are either $0$ or $1$, it implies that 
\begin{align}\label{zi}
\frac{1}{2\gamma}|z|^2\leq\frac{1}{2}|z|^2+\frac{1-\gamma}{2\gamma}z\sigma'\Sigma^{-1}\sigma z'\leq\frac{1}{2}|z|^2.
\end{align}
Then, we can derive that
\begin{align*}
&z\left(\frac{1}{2}I_n+\frac{1-\gamma}{2\gamma}\sigma_t'\Sigma_t^{-1}\sigma_t\right)z'+\frac{1-\gamma}{\gamma}\mu_t'\Sigma_t^{-1}\sigma_tz'\\
\leq&\frac{1}{2}|z|^2+\frac{\gamma-1}{\gamma}\left|\mu_t'\Sigma_t^{-1}\sigma_t\right||z|\leq\frac{2\gamma-1}{2\gamma}|z|^2+\frac{\gamma-1}{2\gamma}\mu_t'\Sigma_t^{-1}\mu_t
\end{align*}
and
\begin{align*}
&z\left(\frac{1}{2}I_n+\frac{1-\gamma}{2\gamma}\sigma_t'\Sigma_t^{-1}\sigma_t\right)z'+\frac{1-\gamma}{\gamma}\mu_t'\Sigma_t^{-1}\sigma_tz'\\
\geq&\frac{1}{2\gamma}|z|^2-\frac{\gamma-1}{\gamma}\left|\mu_t'\Sigma_t^{-1}\sigma_t\right||z|\geq\frac{2-\gamma}{2\gamma}|z|^2-\frac{\gamma-1}{2\gamma}\mu_t'\Sigma_t^{-1}\mu_t.
\end{align*}
It follows from Theorem 2.3 of \cite{kobylanski2000backward} that BSDE whose terminal condition is $0$ and generator is $H^{m,n,k}$ admits\footnote{ ${S}^\infty$ denotes the space of $1$-dimensional continuous adapted processes $Y$ such that $||\sup_{0\leq s\leq T}|Y_s|||_{\infty}<+\infty$,  and $\mathcal{M}^2$ denotes the space of predictable processes $Z$ such that $\mathbf{E}\left[\int_0^T|Z_s|^2ds\right]<+\infty$.} a solution  $(Y^{m,n,k}, Z^{m,n,k})\in\mathcal{S}^\infty\times\mathcal{M}^2$. 

Motivated by \cite{fan2023user}, we construct a test function $\phi$.  More sepcifically, for each $(s,x)\in[0,T]\times[0,+\infty)$, $h:[0,T]\to[0,+\infty)$ and $\lambda>0$, define the function
\begin{align}\label{eq:test}
    \phi(s,x; h,\lambda):=\exp\left(\lambda x+\lambda\int_0^sh(r)dr\right),
\end{align}
which satisfies 
\begin{align}\label{psi0}
\left\{
\begin{aligned}
&\phi_s(s,x;h,\lambda)-h(s)\phi_x(s,x;h,\lambda)=0,\\
&\phi_{xx}(s,x;h,\lambda)-\lambda\phi_x(s,x;h,\lambda)=0,\\
&\phi(s,x;h,\lambda)\geq \lambda x\Rightarrow\phi_x(s,x;h,\lambda)\geq \lambda.
\end{aligned}
\right.
\end{align} 
Since for each $m,n,k$ and any $(s,y,z)$,
\begin{align*}
&\mathbb{I}_{\{y>0\}}H^{m,n,k}(s,y,z)\leq\frac{2\gamma-1}{2\gamma}\mathbb{I}_{\{Y_s>0\}}|z|^2+(\gamma-1)r_s^--\delta\theta,\\
&\mathbb{I}_{\{y\leq0\}}H^{m,n,k}(s,y,z)
\geq\frac{1-\gamma}{2\gamma}\mathbb{I}_{\{Y_s\leq0\}}|z|^2+(1-\gamma)r^+_s+\frac{1-\gamma}{\gamma}\mu'_s\Sigma_s^{-1}\mu_s+\theta\frac{\delta^\psi}{\psi}.
\end{align*}
Applying It${\rm\hat{o}}$-Tanaka's formula to $\phi$ and using the fact (\ref{psi0}), we have
\begin{align*}
&d\phi(s,(Y_s^{m,n,k})^+;(\gamma-1)r_\cdot^--\delta\theta,2p)\\
\geq & p\left(2p-\frac{2\gamma-1}{\gamma}\right)\mathbb{I}_{\{Y_s^{m,n,k}>0\}}|Z^{m,n,k}_s|^2ds
+\phi_x(s,(Y_s^{m,n,k})^+;(\gamma-1)r_\cdot^--\delta\theta,2p) \mathbb{I}_{\{Y^{m,n,k}_s>0\}}Z^{m,n,k}_sdW_s,
\end{align*}and
\begin{align*}
&d\phi\left(s,(Y^{m,n,k}_s)^-;-\left[(1-\gamma)r^+_\cdot+\frac{1-\gamma}{\gamma}\mu'_\cdot\Sigma_\cdot^{-1}\mu_\cdot+\theta\frac{\delta^\psi}{\psi}\right],q\right)\\
\geq &\frac{1}{2}q\left(q-\frac{\gamma-1}{\gamma}\right)\mathbb{I}_{\{Y^{m,n,k}_s\leq0\}}|Z^{m,n,k}_s|^2ds\\
& -\phi_x\left(s,(Y^{m,n,k}_s)^-;-\left[(1-\gamma)r^+_\cdot+\frac{1-\gamma}{\gamma}\mu'_\cdot\Sigma_\cdot^{-1}\mu_\cdot+\theta\frac{\delta^\psi}{\psi}\right],q\right)\mathbb{I}_{\{Y^{m,n,k}_s\leq0\}}Z^{m,n,k}_sdW_s.
\end{align*}
By Assumption \ref{11}, then it implies that $\phi(T,0;(\gamma-1)r_\cdot^--\delta\theta,2p)$ and $\phi(T,0;(\gamma-1)r^+_\cdot+\frac{\gamma-1}{\gamma}\mu'_\cdot\Sigma_\cdot^{-1}\mu_\cdot-\theta\frac{\delta^\psi}{\psi},q)$ are integrable. Thus, by Fatou's lemma and the convergence theorem, we can derive that  
\begin{align}
2p(Y^{m,n,k}_t)^+\leq&\phi(t,(Y^{m,n,k}_t)^+;(\gamma-1)r_\cdot^--\delta\theta,2p)\notag\\
&+p\left(2p-\frac{2\gamma-1}{\gamma}\right)\mathbf{E}\left[\int_t^T\mathbb{I}_{\{Y^{m,n,k}_s>0\}}|Z^{m,n,k}_s|^2ds\Big|\mathscr{F}_t\right]\notag\\
\leq&\mathbf{E}\left[\phi(T,0;(\gamma-1)r_\cdot^--\delta\theta,2p)\big|\mathscr{F}_t\right]=:\overline{Y}_t,\label{psi}\\
q(Y^{m,n,k}_t)^-\leq&\phi(t,(Y^{m,n,k}_t)^-;(\gamma-1)r^+_\cdot+\frac{\gamma-1}{\gamma}\mu'_\cdot\Sigma_\cdot^{-1}\mu_\cdot-\theta\frac{\delta^\psi}{\psi},q)\notag\\
&+\frac{1}{2}q\left(q-\frac{\gamma-1}{\gamma}\right)\mathbf{E}\left[\int_t^T\mathbb{I}_{\{Y^{m,n,k}_s\leq0\}}|Z^{m,n,k}_s|^2ds\Big|\mathscr{F}_t\right]\notag\\
\leq&\mathbf{E}\left[\phi(T,0;(\gamma-1)r^+_\cdot+\frac{\gamma-1}{\gamma}\mu'_\cdot\Sigma_\cdot^{-1}\mu_\cdot-\theta\frac{\delta^\psi}{\psi},q)\Big|\mathscr{F}_t\right]=:\underline{Y}_t.\label{psi2} 
\end{align}
Note that the choice of $p$ and $q$ in Assumption \ref{1} ensures that $p>\frac{2\gamma-1}{2\gamma}$ and $q>\frac{\gamma-1}{\gamma}$, hence the first inequalities of \eqref{psi} and \eqref{psi2} hold. The two inequalities indicate $\overline{Y}$ and $\underline{Y}$  are both class (D) independent of $m,n,k$. Since $H^{m+1,n,k+1}\leq H^{m+1,n,k}\leq H^{m,n,k}\leq H^{m,n+1,k}$, the comparison result Theorem 2.6 in \cite{kobylanski2000backward} implies that $Y^{m,n,k}$ is nonincreasing with $m$ and $k$ and nondecreasing with $n$. 

Furthermore, to apply the localization technique in Theorem 6 of \cite{briand2006bsde}, we introduce the following stopping time: for integer $i\geq1$, $\iota_i:=\inf\{t\in[0, T]: \max\{\overline{Y}_t,\underline{Y}_t\}\geq i\}\wedge T$ with the convention $\inf\emptyset=+\infty$. The assumption \ref{1} ensures that both $\overline{Y}_t$ and $\underline{Y}_t$ are integrable. Thus} applying the localization technique in Theorem 6 of \cite{briand2006bsde}, inequalities \eqref{psi} and \eqref{psi2}    allow us to construct a solution $(Y,Z)$ to BSDE (\ref{mar}) such that for each $t\in[0,T]$,
$$\inf_{m}\inf_{k}\sup_{n}(Y_t^{m,n,k})^+=:Y_t^+ \text{~~~and~~~}\inf_{m}\inf_{k}\sup_{n}(Y_t^{m,n,k})^-=:Y_t^-.$$
Meanwhile, 
$$\left\{\phi(t,Y_t^+;(\gamma-1)r_\cdot^--\delta\theta, 2p)\right\}_{t\in[0,T]}$$ and $$\left\{\phi\left(t,Y_t^-;(\gamma-1)r^+_\cdot+\frac{\gamma-1}{\gamma}\mu'_\cdot\Sigma_\cdot^{-1}\mu_\cdot-\theta\frac{\delta^\psi}{\psi},q\right)\right\}_{t\in[0,T]}$$ are of class (D), which implies both  $$\left\{\exp\left(2pY_t^++2p(\gamma-1)\int_0^tr_s^-ds\right)\right\}_{t\in[0,T]}$$ 
and 
$$\left\{\exp\left(qY_t^-+q(\gamma-1)\int_0^t\left(r^+_s+\frac{1}{\gamma}\mu'_s\Sigma_s^{-1}\mu_s\right)ds\right)\right\}_{t\in[0,T]}$$ 
are of class (D). 

Secondly we present the integrability of $(Y,Z)$. The above estimate results also hold for some sufficiently large parameters $\tilde{p}<p$ and $\tilde{q}<q$. Therefore, applying Doob's maximal inequality to 
$\phi(t,Y_t^+;(\gamma-1)r_\cdot^--\delta\theta,2\tilde{p})$, we have
$$\mathbf{E}\left[e^{2p(Y_\cdot^+)_*+2p\int_0^T((\gamma-1)r_s^--\delta\theta) ds}\right]\leq c_{\tilde{p},p} e^{-2p\delta\theta T}\mathbf{E}\left[e^{2p(\gamma-1)\int_0^Tr_s^-ds}\right]<+\infty,$$where $c_{\tilde{p},p}$ is a suitable constant. Similarly
$$\mathbf{E}\left[e^{q(Y_\cdot^-)_*+q\int_0^T((\gamma-1)r^+_s+\frac{\gamma-1}{\gamma}\mu'_s\Sigma_s^{-1}\mu_s-\theta\frac{\delta^\psi}{\psi} )ds}\right]<+\infty.$$
Then Assumption \ref{1} implies that $\mathbf{E}\left[e^{2p(Y_\cdot^+)_*}+e^{q(Y_\cdot^-)_*}\right]<+\infty$.
The square integrability of $Z$ comes from the integrability of $e^{2p(Y_\cdot^+)_*}$, $e^{q(Y_\cdot^-)_*}$ and \eqref{psi} and \eqref{psi2}.

Thirdly we derive the uniqueness result of the solution to BSDE (\ref{mar}), which is a consequence of the following Lemma. 

\begin{lemma}\label{unique} 
Suppose that $\gamma, \psi>1$ and Assumption \ref{1} holds. Then BSDE (\ref{mar}) admits a unique solution $(Y,Z)$ satisfying (\ref{exin}).
\end{lemma}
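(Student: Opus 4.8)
Since existence in the class \eqref{exin} has just been established, the plan is to prove uniqueness, and for this I would exploit the two structural features of the generator \eqref{generator-complete}. By \eqref{zi} the quadratic-in-$z$ part $\Phi(s,z):=z\left(\tfrac12 I_n+\tfrac{1-\gamma}{2\gamma}\sigma_s'\Sigma_s^{-1}\sigma_s\right)z'+\tfrac{1-\gamma}{\gamma}\mu_s'\Sigma_s^{-1}\sigma_s z'$ has a positive-definite leading matrix, so $H$ is \emph{convex} in $z$; meanwhile, because $\theta<0$ and $\psi>1$, the term $\ell(y):=\tfrac{\theta}{\psi}\delta^\psi e^{-\frac{\psi}{\theta}y}$ satisfies $\ell'(y)=-\delta^\psi e^{-\frac{\psi}{\theta}y}<0$ and $\ell''<0$, i.e. it is \emph{strictly decreasing and concave} in $y$. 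These are precisely the features that make the $\theta$-difference / convex-duality technique of \cite{delbaen2011uniqueness} (and the comparison framework of \cite{fan2020uniqueness,fan2023user}) applicable. Let $(Y^1,Z^1)$ and $(Y^2,Z^2)$ be two solutions in the class \eqref{exin}; I aim to show $Y^1\le Y^2$, the reverse inequality following by symmetry and $Z^1=Z^2$ then coming from the martingale representation.

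For $\theta\in(0,1)$ set $U^\theta:=\frac{Y^1-\theta Y^2}{1-\theta}$ and $V^\theta:=\frac{Z^1-\theta Z^2}{1-\theta}$, so that $(U^\theta,V^\theta)$ solves a BSDE with zero terminal value and driver $\frac{1}{1-\theta}\big(H(s,Y^1,Z^1)-\theta H(s,Y^2,Z^2)\big)$. Writing $Z^1=\theta Z^2+(1-\theta)V^\theta$ and using convexity of $\Phi$ in $z$, the $z$-part of this driver is dominated by $\Phi(s,V^\theta)$, while a mean-value linearization of the $y$-part yields $\ell'(\xi_s)(U^\theta_s-Y^2_s)+\ell(Y^2_s)$ for some intermediate $\xi_s$, the coefficient $\ell'(\xi_s)<0$ being in the favourable (monotone-decreasing) direction. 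Subtracting the equation for $Y^1$ and setting $D:=U^\theta-Y^1=\tfrac{\theta}{1-\theta}(Y^1-Y^2)$, a further first-order linearization $\Phi(s,V^\theta)-\Phi(s,Z^1)=\nabla_z\Phi(s,\zeta_s)\,(V^\theta-Z^1)$ shows that $D$ is a subsolution of a \emph{linear} BSDE whose drift is $\alpha_s D+\beta_s\cdot(\text{Brownian part})$ plus an error term controlled by $\ell(Y^2)-\ell(Y^1)$ and $\ell'(\xi_s)$.

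The linear Brownian coefficient $\beta_s$ is, by the computation just made, exactly of the form $\tfrac{1-\gamma}{\gamma}\mu_s'\Sigma_s^{-1}\sigma_s+Z_s\big(I_n+\tfrac{1-\gamma}{\gamma}\sigma_s'\Sigma_s^{-1}\sigma_s\big)$, i.e. the same integrand appearing in the stochastic exponential $Q$ of \eqref{mse}: unbounded but square integrable. I would absorb $\beta_s$ by a Girsanov change of measure $\mathbb{P}\to\mathbb{Q}^\theta$ and then take $\mathbb{Q}^\theta$-conditional expectations to obtain $Y^1_t-Y^2_t\le\frac{1-\theta}{\theta}\,\varepsilon_\theta(t)$, where $\varepsilon_\theta$ is controlled by the integrable majorants $\overline Y,\underline Y$ built in \eqref{psi}–\eqref{psi2}; letting $\theta\uparrow1$ forces $Y^1\le Y^2$, and symmetry completes the argument.

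The main obstacle is exactly the one met in Theorem \ref{main}: justifying that the Girsanov density associated with the unbounded coefficient $\beta_s$ is a \emph{true} martingale and that the residual local-martingale parts are true martingales under $\mathbb{Q}^\theta$, so that the conditional expectations are legitimate and the error term genuinely vanishes as $\theta\uparrow1$. This is precisely where the exponential integrability \eqref{exin} of $Y^{\pm}$ together with the class (D) properties \eqref{exinD1}–\eqref{exinD2} enter: combined with de la Vall\'ee Poussin's lemma and the relative-entropy estimate used elsewhere in the paper, they control $\int_0^T\mu_s'\Sigma_s^{-1}\mu_s\,ds$ and $\int_0^T|Z_s|^2ds$ strongly enough to upgrade the relevant exponential local martingales to genuine martingales. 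The non-Lipschitz, exponentially growing factor $\ell'(\xi_s)$ is a secondary difficulty, to be handled by the two-sided a priori exponential bounds on $Y$ from \eqref{psi}–\eqref{psi2}, which keep $\xi_s$ in a region where $\ell'$ is integrable against the change of measure.
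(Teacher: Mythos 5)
Your overall strategy differs from the paper's. You propose the $\theta$-difference comparison technique (consider $\frac{Y^1-\theta Y^2}{1-\theta}$, exploit convexity in $z$, linearize, change measure, let $\theta\uparrow1$), whereas the paper proves uniqueness via the Legendre--Fenchel dual of $H$ in $z$: it represents any solution satisfying \eqref{exin} as $Y=\operatorname{essinf}_{u\in\mathscr{A}}Y^u$, where $Y^u$ solves a linearized BSDE under the measure generated by $u$, and then verifies that the pointwise optimizer $u^*=H_z'(\cdot,Y,Z)$ is admissible by a relative-entropy argument. This is not a cosmetic difference: the dual representation is exactly the device of \cite{delbaen2011uniqueness} designed to obtain uniqueness when the data admit only \emph{one given} exponential moment, which is the situation here.

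The genuine gap in your argument is the limit $\theta\uparrow1$. Two quantities blow up at rate $(1-\theta)^{-1}$ and you have no uniform control over them. First, your linearization point $\zeta_s$ lies between $Z^1_s$ and $V^\theta_s=\frac{Z^1_s-\theta Z^2_s}{1-\theta}$, so the Girsanov coefficient $\nabla_z\Phi(s,\zeta_s)$ is of order $\frac{|Z^1_s-Z^2_s|}{1-\theta}$; it is \emph{not} the integrand of $Q$ in \eqref{mse} (that would require $Z^1=Z^2$, which is what you are trying to prove), and the relative entropy of $\mathbb{Q}^\theta$ cannot be bounded uniformly in $\theta$ by the estimates \eqref{Lem7leq}--\eqref{marineq}. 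Second, even in the classical formulation of the $\theta$-trick that avoids linearizing around $Z^1$ (exponential transform of $U^\theta$ at the scale of the quadratic coefficient), one must control moments of $\exp\bigl(C\,\frac{(Y^1)^-+(Y^2)^+}{1-\theta}\bigr)$ and of $\exp\bigl(\frac{C}{1-\theta}\int_0^T(|r_s|+\mu_s'\Sigma_s^{-1}\mu_s)\,ds\bigr)$; this is why the comparison theorems based on this technique require exponential moments of \emph{every} order. Assumption \ref{1} and the a priori bound \eqref{exin} only provide moments at the fixed orders $2p$ and $q$, so the error term $\varepsilon_\theta(t)$ you invoke is not actually shown to stay bounded, let alone to be $o\bigl((1-\theta)^{-1}\bigr)$, as $\theta\uparrow1$. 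Your treatment of the $y$-term $\ell(y)=\frac{\theta}{\psi}\delta^\psi e^{-\psi y/\theta}$ is fine (its derivative is negative, hence favourable for comparison, and concavity of $\ell$ is consistent with the concave-in-$y$/convex-in-$z$ structure the paper exploits), but this does not rescue the measure-change step. To close the gap you would have to replace the $\theta$-difference scheme by the variational representation $Y=\operatorname{essinf}_u Y^u$ and prove admissibility of the optimal $u^*$ via the entropy estimate, which is precisely the paper's route.
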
 

\begin{proof} 

For each $(t,y)\in[0,T]\times \mathbb{R}$, since $H(t,y,\cdot)$ is a convex function,  we can define its Legendre-Fenchel transformation as follows.
\begin{align*}
J(t,y,l):=&\inf_{z\in \mathbb{R}^{1\times n}}(H(t,y,z)-zl)\\
=&-\frac{1}{2}\left(l'+\frac{\gamma-1}{\gamma}\mu_t'\Sigma_t^{-1}\sigma_t\right)\left(I_n+\frac{1-\gamma}{\gamma}\sigma'_t\Sigma_t^{-1}\sigma_t\right)^{-1}\left(l+\frac{\gamma-1}{\gamma}\sigma_t'\Sigma_t^{-1}\mu_t\right)\\
&+\frac{\theta}{\psi}\delta^\psi e^{-\frac{\psi}{\theta}y}+\frac{1-\gamma}{2\gamma}\mu_t'\Sigma_t^{-1}\mu_t+(1-\gamma) r_{t}-\delta \theta, ~~~~\forall l\in\mathbb{R}^n.
\end{align*}
Given a set
\begin{align*}
\mathscr{A}:=\Bigg\{ u ~\Bigg | ~
\begin{array}{ll}
\int_0^T|u_s|^2ds<+\infty, \mathbb{P}\text{-}a.s., ~
M:=\mathscr{E}\left(\int u_sdW_s\right)\text{is a martingale},\\\frac{d\mathbb{Q}}{d\mathbb{P}}:=M_T, ~~\mathbf{E}^{\mathbb{Q}}\left[\int_{0}^{T}|u_s|^2ds\right]<+\infty.
\end{array}\Bigg\}
\end{align*}
Let $u\in\mathscr{A}$ then $dW_t^u:=dW_t-u_tdt$ is a Brownian motion under $\mathbb{Q}$.  Since the eigenvalues of $\sigma'\Sigma^{-1}\sigma$ are either $0$ or $1$, then the eigenvalues of $\left(I_n+\frac{1-\gamma}{\gamma}\sigma'_t\Sigma_t^{-1}\sigma_t\right)^{-1}$ are $\gamma$ or $1$, which implies
\begin{align}\label{Jestimate}
&\left(u_t'+\frac{\gamma-1}{\gamma}\mu_t'\Sigma_t^{-1}\sigma_t\right)\left(I_n+\frac{1-\gamma}{\gamma}\sigma'_t\Sigma_t^{-1}\sigma_t\right)^{-1}\left(u_t+\frac{\gamma-1}{\gamma}\sigma_t'\Sigma_t^{-1}\mu_t\right)\notag\\
\geq&|u_t|^2+\frac{(\gamma-1)^2}{\gamma^2}\mu_t'\Sigma_t^{-1}\mu_t+\frac{2(\gamma-1)}{\gamma}\mu_t'\Sigma_t^{-1}\sigma_tu_t\notag\\
\geq&|u_t|^2+\frac{(\gamma-1)^2}{\gamma^2}\mu_t'\Sigma_t^{-1}\mu_t+\frac{2(1-\gamma)}{\gamma}|\mu_t'\Sigma_t^{-1}\sigma_t||u_t|\notag\\
\geq&|u_t|^2+\frac{(\gamma-1)^2}{\gamma^2}\mu_t'\Sigma_t^{-1}\mu_t+\frac{1-\gamma}{\gamma}\left(\mu_t'\Sigma_t^{-1}\mu_t+|u_t|^2\right)\notag\\
=&\frac{1}{\gamma}|u_t|^2+\frac{1-\gamma}{\gamma^2}\mu_t'\Sigma_t^{-1}\mu_t
\end{align}
and
\begin{align}\label{Jestimate2}
&\left(u_t'+\frac{\gamma-1}{\gamma}\mu_t'\Sigma_t^{-1}\sigma_t\right)\left(I_n+\frac{1-\gamma}{\gamma}\sigma'_t\Sigma_t^{-1}\sigma_t\right)^{-1}\left(u_t+\frac{\gamma-1}{\gamma}\sigma_t'\Sigma_t^{-1}\mu_t\right)\notag\\
\leq&\gamma\left(u_t'+\frac{\gamma-1}{\gamma}\mu_t'\Sigma_t^{-1}\sigma_t\right)\left(u_t+\frac{\gamma-1}{\gamma}\sigma_t'\Sigma_t^{-1}\mu_t\right)\notag\\
\leq&2\gamma|u_t|^2+2\frac{(\gamma-1)^2}{\gamma}\mu_t'\Sigma_t^{-1}\mu_t.
\end{align}
By Fenchel's inequality, we can get that  
\begin{align}\label{uniqueness}
\mathbf{E}^{\mathbb{Q}}\left[\int_{0}^{T}\mu_s'\Sigma_s^{-1}\mu_s+r_sds\right]<+\infty,
\end{align}
from Assumption \ref{1} and the fact that $u\in\mathscr{A}$.

Then under (\ref{Jestimate}), (\ref{Jestimate2}) and (\ref{uniqueness}), we can apply Proposition 6.4 in \cite{briand2003lp}: there exists a pair of processes $(Y^u,Z^u)$ to BSDE
$$Y_t^u=\int_t^TJ(s,Y_s^u,u_s)ds-\int_t^TZ_s^udW_s^u,\ t\in[0,T],$$
such that $\int_0^T|Z_s^u|^2ds<+\infty$ and  $\int_0^T|J(s,Y_s^u,u_s)|ds<+\infty$, $\mathbb{P}$-a.s. and $(Y_t^u)_{t\in[0,T]}$ belongs to the class (D) under $\mathbb{Q}$.

The uniqueness of $(Y, Z)$ follows if we can show that 
$$Y=\operatorname{essinf}_{u\in\mathscr{A}}Y^u.$$
To this end, for any $n\geq1$, introduce the stopping time: 
$$\tau_n:=\left\{s\geq t: \int_t^s|Z_r|^2dr+\int_t^s|Z_r^u|^2dr+\int_t^s|u_r|^2dr>n\right\}\wedge T.$$
Since $(Y,Z)$ satisfies  
\begin{align*}
Y_t=\int_t^TL(s,Y_s,Z_s)ds-\int_t^TZ_sdW_s^u,~~~~t\in[0,T],
\end{align*}
where $L(s,y,z):=H(s,y,z)-zu_s\geq J(s,y,u_s)$.
Then applying It${\rm\hat{o}}$'s formula to $(Y_s^u-Y_s)e^{\int_t^sh_rdr}$ with $$h_u:=\frac{L(s,Y_s^u,Z_s)-L(s,Y_s,Z_s)}{Y_s^u-Y_s}\mathbf{1}_{Y_s^u-Y_s\neq0}<0,$$ 
we get
\begin{align*}
(Y_{\tau_n}^u-Y_{\tau_n})e^{\int_t^{\tau_n}h_sds}-(Y_t^u-Y_t)=&\int_t^{\tau_n}e^{\int_t^sh_rdr}(J(s,Y_s^u,u_s)-L(s,Y_s^u,Z_s))ds\\
&+\int_t^{\tau_n}e^{\int_t^sh_rdr}(Z_s^u-Z_s)dW_s^u.
\end{align*}
Thus 
$$Y_t^u-Y_t\geq\mathbf{E}^{\mathbb{Q}}\left[(Y_{\tau_n}^u-Y_{\tau_n})e^{\int_t^{\tau_n}h_sds}|\mathscr{F}_t\right].$$
  
Since $\left(Y_{\tau_n}^ue^{\int_t^{\tau_n}h_sds}\right)_{n\geq1}$ 
 belongs to class $(\rm D)$,
we deduce
$$\lim_{n\to\infty}\mathbf{E}^{\mathbb{Q}}\left[Y_{\tau_n}^ue^{\int_t^{\tau_n}h_sds}\big|\mathscr{F}_t\right]=\mathbf{E}^{\mathbb{Q}}\left[Y_{T}^ue^{\int_t^{T}h_sds}\big|\mathscr{F}_t\right]=0.$$
On the other hand, $\left|Y_{\tau_n}e^{\int_t^{\tau_n}h_sds}\right|\leq(Y_\cdot^+)_*+(Y_\cdot^-)_*$. Moreover, from Fenchel inequality we have
\begin{align*}
\mathbf{E}^{\mathbb{Q}}\left[(Y_\cdot^+)_*\right]=\mathbf{E}\left[M_T(Y_\cdot^+)_*\right]\leq\mathbf{E}\left[e^{2p(Y_\cdot^+)_*}\right]+\frac{1}{2p}\mathbf{E}^{\mathbb{Q}}\left(\int_{0}^{T}|u_s|^2ds\right)-\frac{1+\ln2p}{2p}<+\infty.
\end{align*}
By the same reasoning, we can conclude $\mathbf{E}^{\mathbb{Q}}\left[(Y_\cdot^-)_*\right]<+\infty.$
It then follows from the dominated convergence theorem that
$$\lim_{n\to\infty}\mathbf{E}^{\mathbb{Q}}\left[Y_{\tau_n}e^{\int_t^{\tau_n}h_sds}|\mathscr{F}_t\right]=\mathbf{E}^{\mathbb{Q}}\left[Y_{T}e^{\int_t^{T}h_sds}|\mathscr{F}_t\right]=0.$$
Finally, we obtain
$$Y_t^u-Y_t\geq\lim_{n\to\infty}\mathbf{E}^{\mathbb{Q}}\left[(Y_{\tau_n}^u-Y_{\tau_n})e^{\int_t^{\tau_n}h_sds}|\mathscr{F}_t\right]=0.$$

Taking $u_s^*=H'_z(s,Y_s,Z_s)$, which implies that  $H(s,Y_s,\tilde{Z_s})-H(s,Y_s,Z_s)\geq (\tilde{Z_s}-Z_s)u_s^*$ for any $\tilde{Z_s}$. 
Then $$J(s,Y_s,u_s^*)=H(s,Y_s,Z_s)-Z_su_s^*,$$
from which we obtain $Y_t^{u^*}=Y_t$. 


The rest is to show $u^*\in\mathscr{A}$.  Since 
\begin{align*}
H(s,Y_s,Z_s)=&Z_su_s^*+J(s,Y_s,u_s^*)\\
\leq&Z_su_s^*-\frac{1}{2}\left((u_s^*)'+\frac{\gamma-1}{\gamma}\mu_s'\Sigma_s^{-1}\sigma_s\right)\left(u_s^*+\frac{\gamma-1}{\gamma}\sigma_s'\Sigma_s^{-1}\mu_s\right)+(\gamma-1) r_{s}^--\delta \theta\\
\leq&\left(Z_s-\frac{\gamma-1}{\gamma}\mu_s'\Sigma_s^{-1}\sigma_s\right)u_s^*-\frac{1}{2}|u_s^*|^2+(\gamma-1) r_{s}^--\delta \theta\\
\leq&\frac{1}{2}\left(2\left|Z_s-\frac{\gamma-1}{\gamma}\mu_s'\Sigma_s^{-1}\sigma_s\right|^2+\frac{1}{2}|u_s^*|^2\right)-\frac{1}{2}|u_s^*|^2+(\gamma-1) r_{s}^--\delta \theta,
\end{align*}
then it implies that 
\begin{align*}
\frac{1}{4}|u^*|^2\leq-H(s,Y_s,Z_s)+2|Z_s|^2+\frac{2(\gamma-1)^2}{\gamma^2}\mu_s'\Sigma_s^{-1}\mu_s+(\gamma-1) r_{s}^--\delta \theta,
\end{align*}
which indicates $\int_0^T|u_s^*|^2ds<\infty$, $\mathbb{P}$-a.s. 

The results that $\mathbf{E}^{\mathbb{Q}^*}\left[\int_{0}^{T}|u_s|^2ds\right]<+\infty$  and $(M_t^{u^*})_{t\in[0,T]}$ is a martingale follow the similar proof procedure in Theorem \ref{main}. In fact, define 
$$\chi_n:=\inf\left\{s\geq 0 :  \int_0^s|Z_r|^2+|u_r^*|^2dr>n\right\}\wedge T$$
and $\frac{d\mathbb{Q}^*_n}{d\mathbb{P}}:=M_{\chi_n}^{u^*}$. Recall that $J(s,Y_s,u_s^*)=H(s,Y_s,Z_s)-Z_su_s^*$, then
$$Y_0=Y_{\chi_n}+\int_{0}^{\chi_n}J(s,Y_s,u_s^*)ds-\int_{0}^{\chi_n}Z_sdW_s^{u^*}.$$
It follows from (\ref{Jestimate}) that
$$J(s,Y_s,u_s^*)\leq(\gamma-1)r^-_s-\delta\theta-\frac{\gamma+1}{4\gamma}|u_s^*|^2+\frac{(\gamma-1)|\gamma-2|}{4\gamma^2}\mu_s'\Sigma_s^{-1}\mu_s.$$
Then we obtain
\begin{align*}
Y_0=&\mathbf{E}^{\mathbb{Q}^*_n}\left[Y_{\chi_n}\right]+\mathbf{E}^{\mathbb{Q}^*_n}\left[\int_{0}^{\chi_n}J(s,Y_s,u_s^*)ds\right]\\
\leq&\mathbf{E}^{\mathbb{Q}^*_n}\left[Y_{\chi_n}^+\right]+\mathbf{E}^{\mathbb{Q}^*_n}\left[\int_{0}^{\chi_n}(\gamma-1)r^-_s-\delta\theta-\frac{\gamma+1}{4\gamma}|u_s^*|^2+\frac{(\gamma-1)|\gamma-2|}{4\gamma^2}\mu_s'\Sigma_s^{-1}\mu_sds\right].
\end{align*}
Noting that 
\begin{align}\label{transfer}
\mathbf{E}\left[M_{\chi_n}^{u^*}\ln M_{\chi_n}^{u^*}\right]=\mathbf{E}^{\mathbb{Q}^*_n}\left[\frac{1}{2}\int_{0}^{\chi_n}|u^*_s|^2ds\right].
\end{align}
Hence, by Fenchel's inequality, one can derive 
\begin{align*}
&\mathbf{E}^{\mathbb{Q}^*_n}\left[Y_{\chi_n}^+\right]\leq \mathbf{E}\left[e^{2p\left(Y_\cdot^+\right)_*}\right]+\frac{1}{4p}\mathbf{E}^{\mathbb{Q}^*_n}\left[\int_{0}^{\chi_n}|u^*_s|^2ds\right]-\frac{1+\ln 2p}{2p},\\
&\mathbf{E}^{\mathbb{Q}^*_n}\left[\int_0^{\chi_n}r_s^-ds\right]\leq \mathbf{E}\left[e^{2p\gamma\int_0^{\chi_n}r_s^-ds}\right]+\frac{1}{4p\gamma}\mathbf{E}^{\mathbb{Q}^*_n}\left[\int_{0}^{\chi_n}|u^*_s|^2ds\right]-\frac{1+\ln 2p\gamma}{2p\gamma},\notag\\
&\mathbf{E}^{\mathbb{Q}_n}\left[\int_0^{\chi_n}\mu_s'\Sigma_s^{-1}\mu_sds\right]\leq \mathbf{E}\left[e^{q\int_0^{\chi_n}\mu_s'\Sigma_s^{-1}\mu_sds}\right]+\frac{1}{2q}\mathbf{E}^{\mathbb{Q}^*_n}\left[\int_{0}^{\chi_n}|u^*_s|^2ds\right]-\frac{1+\ln q}{q}.
\end{align*}
Take suitable $p_0<p,q_0<q$ such that $\left(\frac{\gamma+1}{4\gamma}-\frac{1}{4p_0}-\frac{\gamma-1}{4p_0\gamma}+\frac{(\gamma-1)|\gamma-2|}{8q_0\gamma^2}\right)>0$ and we have
\begin{align*}
&\left(\frac{\gamma+1}{4\gamma}-\frac{1}{4p_0}-\frac{\gamma-1}{4p_0\gamma}+\frac{(\gamma-1)|\gamma-2|}{8q_0\gamma^2}\right)\mathbf{E}^{\mathbb{Q}^*_n}\left[\int_{0}^{\chi_n}|u^*_s|^2ds\right]\\
\leq&-Y_0+\mathbf{E}\left[e^{2p_0\left(Y_\cdot^+\right)_*}\right]+(\gamma-1)\mathbf{E}\left[e^{2p_0\gamma\int_0^{\chi_n}r_s^-ds}\right]+\frac{(\gamma-1)|\gamma-2|}{4\gamma^2}\mathbf{E}\left[e^{q_0\int_0^{\chi_n}\mu_s'\Sigma_s^{-1}\mu_sds}\right]\\
&+C_{p_0,q_0,\gamma}-\delta\theta,
\end{align*}
where $C_{p_0,q_0,\gamma}$ is a constant only depending on $p_0,q_0,\gamma$. Thanks to Fatou's lemma, by sending $n\to+\infty$, we obtain from (\ref{transfer}) that $$\mathbf{E}^{\mathbb{Q}^*}\left[\int_{0}^{T}|u_s|^2ds\right]=\mathbf{E}\left[M_T^{u^*}\ln M_T^{u^*}\right]<+\infty.$$
It follows from Jensen's inequality that $M^{u^*}\ln M^{u^*}$ is a $\mathbb{P}$-submartingale, which implies that $$\sup_{t\in[0,T]}\mathbf{E}\left[M^{u^*}_t\ln M^{u^*}_t\right]\leq\mathbf{E}\left[M^{u^*}_T\ln M^{u^*}_T\right]<+\infty.$$
Then de La Vall$\rm\acute{e}$e Poussin's lemma yields that $M^{u^*}$ is a martingale. 
\end{proof}

Finally if Assumption \ref{2} holds as well, it follows from taking the limit as $m$, $n$, and $k$ tend to infinity in (\ref{psi}) that $Y$ is bounded from above by a constant. $\hfill\square$

\subsection{The proof of Theorem \ref{main}}
The proof is divided into two steps.

\textbf{Step 1.} We demonstrate $Q$ is of class (D) under $\mathbb{P}$.
 
For each $n\geq 1$, define $$\tau_n:=\inf\left\{s\geq 0: \left[\int_0^s|Z_r|^2dr+\int_0^s|(1-\gamma)(\pi_r^*)'\sigma_r+Z_r|^2dr\right]\geq n\right\}\wedge T$$
and define a probability measure
$$\frac{d\mathbb{Q}_n}{d\mathbb{P}}:=Q_{\tau_n}=\mathscr{E}\left(\int\left[(1-\gamma)(\pi_s^*)'\sigma_s+Z_s\right]dW_s\right)_{\tau_n}.$$

Applying the following Fenchel inequality 
$$xy=\frac{x}{2p}(2py)\leq e^{2py}+\frac{x}{2p}(\ln x-\ln 2p-1), ~~\forall x>0, y\in\mathbb{R},$$
where $p>1$ is given in Assumption \ref{1},  to $Q_{\tau_n}Y_{\tau_n}$ gives that
\begin{align}\label{Lem7leq}
\mathbf{E}\left[Q_{\tau_n}Y_{\tau_n}\right]\leq\mathbf{E}e^{2p(Y_\cdot^+)_*}+\frac{1}{2p}\mathbf{E}\left[Q_{\tau_n}\ln Q_{\tau_n}\right]-\frac{1+\ln 2p}{2p}\mathbf{E}[Q_{\tau_n}].
\end{align}
Since $W^{\mathbb{Q}_n}=W-\int_0^\cdot(1-\gamma)\sigma_s'\pi_s^*+Z_s'ds$ is a $\mathbb{Q}_n$-Brownian motion,  where $\pi_s^*=\frac{1}{\gamma}\Sigma_s^{-1}\mu_s+\frac{1}{\gamma}\Sigma_s^{-1}\sigma_sZ_s'$ is given in (\ref{os}), then we rewrite BSDE \eqref{mar} as follows:
\begin{align}
Y_{\tau_n}=&Y_0-\int_0^{\tau_n}\bigg(-\frac{1}{2}Z_s\left(I_n+\frac{1-\gamma}{\gamma}\sigma_s'\Sigma_s^{-1}\sigma_s\right)Z_s'+\frac{\theta}{\psi}\delta^\psi e^{-\frac{\psi}{\theta}Y_s}+\frac{1-\gamma}{2\gamma}\mu_s'\Sigma_s^{-1}\mu_s\nonumber\\
&+(1-\gamma) r_{s}-\delta \theta\bigg) ds
+\int_0^{\tau_n}Z_sdW_s^{\mathbb{Q}_n}. \label{eq:bsde-qian}
\end{align}

Noting that the relative entropy between $\mathbb{Q}_n$ and $\mathbb{P}$ equals that
\begin{align*}
&\mathbf{E}\left[Q_{\tau_n}\ln Q_{\tau_n}\right]\\
=&\mathbf{E}^{\mathbb{Q}_n}\left[\frac{1}{2}\int_0^{\tau_n}\big|(1-\gamma)(\pi_s^*)'\sigma_s+Z_s)\big|^2ds\right]\\
=&\mathbf{E}^{\mathbb{Q}_n}\left[\frac{1}{2}\int_0^{\tau_n}Z_s\left(I_n+\frac{1-\gamma^2}{\gamma^2}\sigma_s'\Sigma_s^{-1}\sigma_s\right)Z_s'+\frac{2(1-\gamma)}{\gamma^2}\mu_s'\Sigma_s^{-1}\sigma_sZ_s'+\frac{(1-\gamma)^2}{\gamma^2}\mu_s'\Sigma_s^{-1}\mu_sds\right].
\end{align*}
Therefore, taking the expectation under $\mathbb{Q}_{n}$ on the both sides of \eqref{eq:bsde-qian} and using the above formulation of $\mathbf{E}\left[Q_{\tau_n}\ln Q_{\tau_n}\right]$, then we get that
\begin{align}\label{Lem7geq}
&\mathbf{E}\left[Q_{\tau_n}Y_{\tau_n}\right]\notag\\
=&\mathbf{E}\left[Q_{\tau_n}\ln Q_{\tau_n}\right]+\mathbf{E}^{\mathbb{Q}_n}\left[\int_0^{\tau_n}\frac{\gamma-1}{2\gamma^2}\left(Z_s\sigma_s'\Sigma_s^{-1}\sigma_s+2\mu_s'\Sigma_s^{-1}\sigma_s\right)^2ds\right]+\mathbf{E}^{\mathbb{Q}_n}[Y_0]\notag\\
&+\mathbf{E}^{\mathbb{Q}_n}\left[\int_0^{\tau_n}\frac{(1-\gamma)(\gamma+2)}{2\gamma^2}\mu_s'\Sigma_s^{-1}\mu_sds\right]+\mathbf{E}^{\mathbb{Q}_n}\left[\int_0^{\tau_n}\left(-\frac{\theta}{\psi}\delta^\psi e^{-\frac{\psi}{\theta}Y_s}-(1-\gamma) r_s+\delta \theta\right) ds\right]\notag\\
\geq&\mathbf{E}\left[Q_{\tau_n}\ln Q_{\tau_n}\right]+\mathbf{E}^{\mathbb{Q}_n}[Y_0]+\frac{(1-\gamma)(\gamma+2)}{2\gamma^2}\mathbf{E}^{\mathbb{Q}_n}\left[\int_0^{\tau_n}\mu_s'\Sigma_s^{-1}\mu_sds\right]\notag\\
&+(1-\gamma)\mathbf{E}^{\mathbb{Q}_n}\left[\int_0^{\tau_n}r_s^-ds\right]+\delta\theta T,
\end{align}where the last inequality 
uses $\gamma>1$ and $\theta<0$.

Now we use the Fenchel inequality to $Q_{\tau_n}\int_0^{\tau_n}r_s^-ds$ and $Q_{\tau_n}\int_0^{\tau_n}\mu_s'\Sigma_s^{-1}\mu_sds$ and derives that 
\begin{align*}
&\mathbf{E}^{\mathbb{Q}_n}\left[\int_0^{\tau_n}r_s^-ds\right]\leq \mathbf{E}\left[e^{2p\gamma\int_0^{\tau_n}r_s^-ds}\right]+\frac{1}{2p\gamma}\mathbf{E}\left[Q_{\tau_n}\ln Q_{\tau_n}\right]-\frac{1+\ln 2p\gamma}{2p\gamma},\notag\\
&\mathbf{E}^{\mathbb{Q}_n}\left[\int_0^{\tau_n}\mu_s'\Sigma_s^{-1}\mu_sds\right]\leq \mathbf{E}\left[e^{q\int_0^{\tau_n}\mu_s'\Sigma_s^{-1}\mu_sds}\right]+\frac{1}{q}\mathbf{E}\left[Q_{\tau_n}\ln Q_{\tau_n}\right]-\frac{1+\ln q}{q},
\end{align*}
where $p,q>1$ are given in Assumption \ref{1}. Plugging the above two inequalities into (\ref{Lem7geq}),  we have
\begin{align}\label{Lem7geq2}
&\mathbf{E}\left[Q_{\tau_n}Y_{\tau_n}\right]\notag\\
\geq&\left[1+\frac{(1-\gamma)(\gamma+2)}{2\gamma^2q}+\frac{1-\gamma}{2p\gamma}\right]\mathbf{E}\left[Q_{\tau_n}\ln Q_{\tau_n}\right]+\frac{(1-\gamma)(\gamma+2)}{2\gamma^2}\mathbf{E}\left[e^{q\int_0^{\tau_n}\mu_s'\Sigma_s^{-1}\mu_sds}\right]\notag\\
&+(1-\gamma)\mathbf{E}\left[e^{2p\gamma\int_0^{\tau_n}r_s^-ds}\right]+Y_0+\delta\theta T-\frac{1+\ln q}{q}\frac{(1-\gamma)(\gamma+2)}{2\gamma^2}+\frac{1+\ln 2p\gamma}{2p\gamma}(1-\gamma).
\end{align}
Combing (\ref{Lem7leq}) and (\ref{Lem7geq2}) together, we can conclude that
\begin{align}\label{marineq}
&\left[1-\frac{1}{2p}+\frac{1-\gamma}{2p\gamma}+\frac{(1-\gamma)(\gamma+2)}{2\gamma^2q}\right]\mathbf{E}\left[Q_{\tau_n}\ln Q_{\tau_n}\right]\notag\\
\leq&\mathbf{E}\left[e^{2p(Y_\cdot^+)_*}\right]-\frac{1+\ln 2p}{2p}+\frac{(\gamma-1)(\gamma+2)}{2\gamma^2}\mathbf{E}\left[e^{q\int_0^{\tau_n}\mu_s'\Sigma_s^{-1}\mu_sds}\right]+(\gamma-1)\mathbf{E}\left[e^{2p\gamma\int_0^{\tau_n}r_s^-ds}\right]-Y_0\notag\\
&-\delta\theta T+\frac{1+\ln q}{q}\frac{(1-\gamma)(\gamma+2)}{2\gamma^2}-\frac{1+\ln 2p\gamma}{2p\gamma}(1-\gamma)\notag\\
<&+\infty.
\end{align}
The choice of $p,q$ from Assumption \ref{1}, especially $q>\frac{p(\gamma-1)(\gamma+2)}{\gamma(1+2(p-1)\gamma)}$, guarantees that $1-\frac{1}{2p}+\frac{1-\gamma}{2p\gamma}+\frac{(1-\gamma)(\gamma+2)}{2\gamma^2q}>0$. 
Sending $n\to+\infty$, then we have $\mathbf{E}\left[Q_T\ln Q_T\right]<+\infty$.  On the other hand, Jensen's inequality yields $Q\ln Q$ is a $\mathbb{P}$-submartingale, which implies that $$\sup_{t\in[0,T]}\mathbf{E}\left[Q_t\ln Q_t\right]\leq\mathbf{E}\left[Q_T\ln Q_T\right]<+\infty.$$ It then follows from de La Vall$\rm\acute{e}$e Poussin's lemma that $Q$ is of class (D), and thereby it is a true martingale. 

\textbf{Step 2.} We show that $(c^*,\pi^*)\in\mathcal{A}$.

First we confirm that $\left(\mathcal{W}^*\right)^{1-\gamma}e^{Y}$ is of class (D).  It suggests from (\ref{up}) that
\begin{align}\label{UI}
\left(\mathcal{W}_t^*\right)^{1-\gamma}e^{Y_t}=\omega^{1-\gamma}e^{Y_0}\operatorname{exp}\left(\int_0^t\left(\theta\delta-\theta\delta^\psi e^{-\frac{\psi}{\theta}Y_s}\right)ds\right)Q_t,~~~~0\leq t\leq T.
\end{align}
Since $Y$ is bounded from above according to Proposition \ref{Y} and $Q$ is uniformly integrable, then $\left(\mathcal{W}^*\right)^{1-\gamma}e^{Y}$ is of class (D). 

Based on Proposition \ref{g}, $G^{c^*,\pi^*}$ is a local martingale. Taking a  localizing sequence $(\sigma_n)_{n\geq1}$ and on $\{t<\sigma_n\}$, we have
\begin{align*}
&\frac{(\mathcal{W}_t^*)^{1-\gamma}}{1-\gamma}e^{Y_t}+\delta\theta\mathbf{E}\left[\int_t^{T\wedge\sigma_n}\frac{(\mathcal{W}_s^*)^{1-\gamma}}{1-\gamma}e^{Y_s}ds\Big|\mathscr{F}_t\right]\\
=&\mathbf{E}\left[\frac{(\mathcal{W}_{T\wedge\sigma_n}^*)^{1-\gamma}}{1-\gamma}e^{Y_{T\wedge\sigma_n}}+\int_t^{T\wedge\sigma_n}\delta\frac{\left(c_s^*\right)^{1-\frac{1}{\psi}}}{1-\frac{1}{\psi}}\left((\mathcal{W}_s^*)^{1-\gamma}e^{Y_s}\right)^{1-\frac{1}{\theta}}ds\Big|\mathscr{F}_t\right].
\end{align*}
Since $\frac{(\mathcal{W}^*)^{1-\gamma}}{1-\gamma}e^{Y}\leq0$  and $\gamma,\psi>1$, the integrand of LHS is nonpositive and that of RHS is nonnegative. It then follows from the monotone convergence theorem with the class (D) property of $(\mathcal{W}^*)^{1-\gamma}e^{Y}$ that
\begin{align}\label{admi}
\frac{\left(\mathcal{W}_t^*\right)^{1-\gamma}}{1-\gamma}e^{Y_t}=\mathbf{E}\left[\int_t^Tf\left(c_s^*,\frac{\left(\mathcal{W}_s^*\right)^{1-\gamma}}{1-\gamma}e^{Y_s}\right)ds+\frac{\left(\mathcal{W}_T^*\right)^{1-\gamma}}{1-\gamma}\Bigg|\mathscr{F}_t\right].
\end{align}
This means that $V^{c^*,\pi^*}$ exists and $V^{c^*,\pi^*}<0$. Therefore, $(c^*,\pi^*)\in\mathcal{A}$. 
Finally,  (\ref{maxuti}) and (\ref{admi}) imply that  $\frac{\omega^{1-\gamma}}{1-\gamma}e^{Y_0}$ is the optimal value function. $\hfill\square$

\subsection{The proof of Theorem \ref{dual}}
For any $D\in\widehat{\mathcal{D}}$ and $y>0$,  define
\begin{align}\label{upo2}
R_t^{yD}:=\frac{\gamma (yD_t)^{\frac{\gamma-1}{\gamma}}}{1-\gamma}e^{\frac{Y_t}{\gamma}}+\int_0^tg\left(yD_s, \frac{ (yD_s)^{\frac{\gamma-1}{\gamma}}}{1-\gamma}e^{\frac{Y_s}{\gamma}}\right)ds,\quad t\in[0,T],
\end{align}
where $Y$ is defined by BSDE (\ref{mar}). Then
\begin{align}\label{up2}
dR_t^{yD}=&\frac{(yD_t)^{\frac{\gamma-1}{\gamma}}}{1-\gamma}e^{\frac{Y_t}{\gamma}}\Bigg[\left(Z_t+(\gamma-1)\xi_t\right)dW_t+\Big((1-\gamma)r_t-\delta\theta+\frac{1}{2\gamma}Z_tZ_t'+\theta\frac{\delta^\psi}{\psi}e^{-\frac{\psi}{\theta}Y_t}\notag\\
&+\frac{1-\gamma}{2\gamma}\xi_t\xi_t'-\frac{1-\gamma}{\gamma}\xi_t Z_t'-H(t,Y_t,Z_t)
\Big)dt\Bigg],
\end{align}
where $\xi$ comes from $dD_t=D_t(-r_tdt+\xi_tdW_t)$ and the generator of BSDE (\ref{mar}) has the following representation:
\begin{align*}
H(t,y,z)=&\frac{\theta}{\psi}\delta^\psi e^{-\frac{\psi}{\theta}y}
+(1-\gamma) r_{t}-\delta \theta+\frac{1}{2\gamma}zz'+\sup_{\mu_t+\sigma_t\eta'=0}\left(\frac{1-\gamma}{2\gamma}\eta\eta'-\frac{1-\gamma}{\gamma}\eta z'\right).
\end{align*}
Since $\gamma>1$, for any $D\in\widehat{\mathcal{D}}$ and $y>0$, then $R^{yD}$ is a local submartingale.  In particular, taking 
$$\xi_t^*=Z_t-\mu_t'\Sigma_t^{-1}\sigma_t-Z_t\sigma_t'\Sigma_t^{-1}\sigma_t,~~t\in[0,T],$$
such that $$\frac{dD_t^*}{D_t^*}=-r_tdt+\xi_t^*dW_t,~~t\in[0,T],$$
then $R^{yD^*}$ is a local martingale if $D^*\in \widehat{\mathcal{D}}$. Next, we claim $D^*\in \widehat{\mathcal{D}}$.   

Define 
$$U_t^{yD^*}:=\frac{\gamma}{1-\gamma}(yD_t^*)^{\frac{\gamma-1}{\gamma}}e^{\frac{Y_t}{\gamma}},~~t\in[0,T].$$Obviously $U^{yD^*}<0$, and we can verify that $U^{yD^*}$ satisfies \eqref{eq:dual-u} from \eqref{upo2}.  It suffices to show that $(D^*)^{\frac{\gamma-1}{\gamma}}e^{\frac{Y}{\gamma}}$ is of class (D). Indeed, from (\ref{up2}),  we  can obtain
\begin{align}\label{UI2}
(D_t^*)^{\frac{\gamma-1}{\gamma}}e^{\frac{Y_t}{\gamma}}=e^{\frac{Y_0}{\gamma}}\operatorname{exp}\left(\int_0^t\left(\frac{\theta\delta}{\gamma}-\frac{\theta\delta^\psi}{\gamma\psi} e^{-\frac{\psi}{\theta}Y_s}\right)ds\right)Q_t, ~~t\in[0,T].
\end{align}
Since $Y$ is bounded from above according to Proposition \ref{Y}, it follows from Theorem \ref{main} that (\ref{UI2}) is uniformly integrable.

Therefore, it implies that 
$U_0^{yD^*}=R_0^{yD^*}=\frac{\gamma y^{\frac{\gamma-1}{\gamma}}}{1-\gamma}e^{\frac{Y_0}{\gamma}}$.  
By the martingale optimal principle, one can derive that $$\max_{(c,\pi)\in\widehat{\mathcal{A}}}V_0^{c,\pi}=V_0^{c^*,\pi^*}=\frac{\omega^{1-\gamma}}{1-\gamma}e^{Y_0},$$ which uses $(c^*,\pi^*)\in\widehat{\mathcal{A}}$ by Theorem \ref{main}.  

Finally, choosing $y^*=\omega^{-\gamma}e^{Y_0}$, then, combining with the duality inequality, we get that  
$$V_0^{c^*,\pi^*}=\frac{\omega^{1-\gamma}}{1-\gamma}e^{Y_0}=U_0^{y^*D^*}+\omega y^*=\min_{y>0}\left(U_0^{yD^*}+\omega y\right)=\min_{y>0}\left(\min_{D\in\widehat{\mathcal{D}}}\left(U_0^{yD}+wy\right)\right).$$
$\hfill\square$

\section{Conclusion}
This paper studies the optimal consumption–investment problem for an investor with Epstein–Zin utility in a non-Markovian market.  The optimal consumption and investment strategies are characterized via a quadratic BSDE. The key observation is that the first component of the solution to this BSDE admits some given exponential moments when the market parameters are exponentially integrable. Consequently, we can verify that a certain exponential local martingale is a true martingale and further obtain the admissibility of the candidate optimal strategy. Then we introduce a dual problem for Epstein-Zin optimization and present the duality equality with non-Markovian markets. Several examples are illustrated and discussed. There remain many open questions, for example, do Assumptions \ref{1} and \ref{2} still hold for Vasicek's short rate model with unbounded interest rate? We leave this and other questions for future research.

\vspace{0.5cm}

\section*{Acknowledgment}
Dejian Tian was supported by the National Natural Science Foundation of China (No. 
12171471) and the Fundamental Research Funds for Central Universities (No. 2024KYJD2008). Harry Zheng was supported in part by Engineering and Physical Sciences Research Council of UK  (Grant No.  EP/V008331/1).  We are deeply grateful to the editor and the two anonymous referees for their constructive comments and suggestions that have helped to improve the paper from the previous version. We also greatly appreciate the presentations of the CUMT seminar on the works of \cite{delbaen2011uniqueness} and \cite{hu2018exponential} during June and July 2023, which inspired our ideas and significantly contributed to the development of our study. After our submission, upon recognizing the similarities but independent works between this work and \cite{hu2024utilityv4, hu2024utility}, we are also profoundly thankful to Professors Hu, Liang and Tang for their collegial communication and constructive comments and suggestions.  


\end{document}